\documentclass[11pt]{article}%
\usepackage{fullpage}
\usepackage{amssymb,amsmath}
\usepackage{graphicx, epsfig}



\def\id#1{\ensuremath{\mathit{#1}}}
\let\idit=\id
\def\idbf#1{\ensuremath{\mathbf{#1}}}
\def\idrm#1{\ensuremath{\mathrm{#1}}}
\def\idtt#1{\ensuremath{\mathtt{#1}}}
\def\idsf#1{\ensuremath{\mathsf{#1}}}
\def\idcal#1{\ensuremath{\mathcal{#1}}}  

\def\floor#1{\lfloor #1 \rfloor}
\def\ceil#1{\lceil #1 \rceil}
\def\etal{\emph{et~al.}}

\newcommand{\no}[1]{}
\newcommand{\myparagraph}[1]{{\bf #1}}
\newcommand{\todo}[1]{} 
\newcommand{\settwo}[2]{\ensuremath{\{\,#1\,|\,#2\,\} }}

\newtheorem{theorem}{Theorem}
\newtheorem{lemma}{Lemma}
\newenvironment{proof}{\trivlist\item[]\emph{Proof}:}%
{\unskip\nobreak\hskip 1em plus 1fil\nobreak$\Box$
\parfillskip=0pt%
\endtrivlist}

\newenvironment{itemize*}%
  {\begin{itemize}%
    \setlength{\itemsep}{0pt}%
    \setlength{\parskip}{0pt}%
    \setlength{\parsep}{0pt}%
    \setlength{\topsep}{0pt}%
    \setlength{\partopsep}{0pt}%
  }%
  {\end{itemize}}%

\newcommand{\cT}{{\cal T}}
\newcommand{\tL}{{\tilde L}}
\newcommand{\tS}{{\tilde S}}
\newcommand{\tB}{{\tilde B}}
\newcommand{\tE}{{\tilde E}}
\newcommand{\tW}{{\tilde W}}
\newcommand{\cTb}{{T}^{bwt}}
\newcommand{\cD}{{\cal D}}
\newcommand{\cB}{{\cal B}}
\newcommand{\cK}{{\cal K}}
\newcommand{\cC}{{\cal C}}
\newcommand{\cI}{{\cal I}}
\newcommand{\cL}{{\cal L}}
\newcommand{\cN}{{\cal N}}
\newcommand{\cM}{{\cal M}}
\newcommand{\Temp}{{Temp}}
\newcommand{\bT}{\mathbb{T}}
\newcommand{\tR}{{\widetilde R}}
\newcommand{\oX}{{\overline X}}
\newcommand{\oY}{{\overline Y}}
\newcommand{\oZ}{{\overline Z}}
\newcommand{\oS}{{\overline S}}
\newcommand{\on}{{\overline n}}
\newcommand{\tpi}{{\widetilde \pi}}
\newcommand{\eps}{\varepsilon}
\newcommand{\tpred}{\mathrm{tpred}}
\newcommand{\Col}{\mathrm{Col}}
\newcommand{\occ}{\mathrm{occ}}

\newcommand{\tcount}{t_{\mathrm{count}}}
\newcommand{\trange}{t_{\mathrm{range}}}
\newcommand{\tlocate}{t_{\mathrm{locate}}}
\newcommand{\textract}{t_{\mathrm{extract}}}
\newcommand{\tsa}{t_{\mathrm{SA}}}

\newcommand{\ra}{\idrm{rank}}
\newcommand{\sel}{\idrm{select}}
\newcommand{\acc}{\idrm{access}}
\newcommand{\init}{\idrm{init}}
\newcommand{\Pref}{\mathit{Pref}}
\pagestyle{plain}


\newcommand{\shortver}[1]{}
\newcommand{\longver}[1]{#1}
\newcommand{\shlongver}[2]{#2}

\begin{document}
\title{Compressed Data Structures  for  Dynamic Sequences}

 \author{
 J. Ian Munro\thanks{Cheriton School of Computer Science, University of Waterloo. Email {\tt imunro@uwaterloo.ca}.}
 \and
 Yakov Nekrich\thanks{Cheriton School of Computer Science, University of Waterloo.
 Email: {\tt yakov.nekrich@googlemail.com}.}
}
\date{}
\maketitle

\begin{abstract}
We consider the problem of storing a dynamic string $S$ over an alphabet $\Sigma=\{\,1,\ldots,\sigma\,\}$ in compressed form. Our representation supports insertions and deletions of symbols and answers three 
fundamental queries: $\acc(i,S)$ returns the $i$-th symbol in $S$, $\ra_a(i,S)$
counts how many times a symbol $a$ occurs among the first $i$ positions in $S$, 
and $\sel_a(i,S)$ finds the position where a symbol $a$ occurs for the $i$-th time. We present the first fully-dynamic data structure for arbitrarily large alphabets that achieves optimal query times for all three operations and supports updates with worst-case time guarantees. Ours is also the first fully-dynamic data structure that needs only $nH_k+o(n\log\sigma)$ bits, where $H_k$ is the $k$-th order entropy and $n$ is the string length. 
Moreover our representation supports extraction of a substring $S[i..i+\ell]$ in optimal $O(\log n/\log\log n + \ell/\log_{\sigma}n)$ time.
\end{abstract}
 

\section{Introduction}
In this paper we consider the problem of storing a sequence $S$ of length $n$ over an alphabet $\Sigma=\{\,1,\ldots,\sigma\,\}$ so that the following operations are supported:\\
- $\acc(i,S)$ returns the $i$-th symbol, $S[i]$, in $S$\\
- $\ra_a(i,S)$ counts how many times $a$ occurs among the first $i$ symbols 
in $S$, $\ra_a(i,S)=|\{\,j\,|\, S[j]=a \text{ and } 1\le j\le i\,\}|$\\
-$\sel_a(i,S)$ finds the position in $S$ where $a$ occurs for the $i$-th time,
$\sel_a(i,S)=j$ where $j$ is such that  $S[j]=a$ and $\ra_a(j,S)=i$. \\
This problem, also known as the rank-select problem, is one of the most fundamental problems in compressed data structures. 
There are many  data structures that store a string in compressed form and support three above defined operations efficiently. There are static data structures that use $nH_0+o(n\log\sigma)$ bits 
or even $nH_k+o(n\log \sigma)$ bits for any $k\le  \alpha\log_{\sigma}n-1$  and a positive constant $\alpha<1$\footnote{Henceforth $H_0(S)=\sum_{a\in \Sigma}\frac{n_a}{n}\log\frac{n}{n_a}$, where $n_a$ is the number of times $a$ occurs in $S$, 
is the $0$-th order entropy and  $H_k(S)$ for $k\ge 0$ is the $k$-th order empirical entropy. $H_k(S)$ can be defined as 
$H_k(S)=\sum_{A\in \Sigma^k}|S_A|H_0(S_A)$, where $S_A$ is the subsequence of $S$ generated by symbols that follow the $k$-tuple $A$; $H_k(S)$ is the lower bound on the average space
usage of any statistical compression method that encodes each symbol using the context of $k$ previous symbols~\cite{Manzini01}.}. Efficient static rank-select data structures are described in \cite{GGV03,GMR06,FMMN07,LP07,LP09,BGNN10,HM10,NS14,BN12}. We refer to~\cite{BN12} for most recent results and a discussion of previous static solutions.

In many situations we must work with dynamic sequences. We must be able to insert a new symbol at an arbitrary 
position $i$ in the sequence or delete an arbitrary symbol $S[i]$.  The design of dynamic solutions, that support insertions and deletions of symbols, is an important problem.  Fully-dynamic data structures for rank-select problem were considered in~\cite{HSS03,CHL04,BB04,MN06,CHLS07,GHSV07,MN08,HSS11}. 
Recently Navarro and Nekrich~\cite{NavarroN13,NavarroN13a} obtained a fully-dynamic solution with $O(\log n/\log \log n)$ times for $\ra$, $\acc$, and $\sel$ operations. By the lower bound of Fredman and Saks~\cite{FS89}, these query times are optimal.  The data structure described in~\cite{NavarroN13} uses $nH_0(S)+o(n\log\sigma)$ bits and supports updates in $O(\log n/\log \log n)$ amortized time. It is also possible to support updates in $O(\log n)$ worst-case time, but then the time for answering a $\ra$ query grows to $O(\log n)$~\cite{NavarroN13a}. All previously known fully-dynamic data structures need at least $nH_0(S)+o(n\log\sigma)$ bits. Two only exceptions are data structures of Jansson et al.~\cite{JanssonSS12} and Grossi et al.~\cite{GrossiRRV13} that keep $S$ in $nH_k(S)+o(n\log\sigma)$ bits, but do not support $\ra$ and $\sel$ queries. 
A more restrictive dynamic scenario was considered by Grossi et al.~\cite{GrossiRRV13} and Jansson et al.~\cite{JanssonSS12}: an update operation \emph{replaces} a symbol $S[i]$ with another symbol so that the total length of $S$ does not change, but insertions of new symbols or deletions of symbols of $S$ are not supported.  Their data structures need $nH_k(S)+ o(n \log\sigma)$ bits and answer access queries in $O(1)$ time; the data structure of Grossi et al.~\cite{GrossiRRV13} also supports rank and select queries in $O(\log n/\log\log n)$ time.

In this paper we describe the first fully-dynamic data structure 
that keeps the input sequence in $nH_k(S)+o(n\log\sigma)$ bits; our representation supports $\ra$, $\sel$, and $\acc$ queries 
in optimal $O(\log n/\log \log n)$ time. 
Symbol insertions and deletions at any position in $S$ are supported in $O(\log n/\log \log n)$ worst-case time.
We list our and previous results for fully-dynamic sequences in Table~\ref{tab:ranksel}. 
Our representation of dynamic sequences also supports the operation of extracting a substring. Previous dynamic data structures 
require $O(\ell)$ calls of $\acc$ operation in order to extract the substring of length $\ell$. Thus the previous best fully-dynamic 
representation, described in~\cite{NavarroN13} needs $O(\ell(\log n/\log\log n))$ time to extract a substring $S[i..i+\ell-1]$ of $S$. Data structures described in~\cite{GrossiRRV13} and~\cite{JanssonSS12} support substring extraction in $O(\log n/\log\log n+\ell/\log_{\sigma}n)$ time but they either do not support $\ra$ and $\sel$ queries or they support only updates that replace a symbol with another symbol. Our dynamic data structure can extract a substring in optimal $O(\log n/\log\log n+\ell/\log_{\sigma}n)$ time without any restrictions on updates or queries. 


\begin{table}[tb]
  \centering
\resizebox{\textwidth}{!}{
  \begin{tabular}{|l|c|c|c|c|c|c|} \hline
    Ref. & Space & Rank & Select & Access & Insert/  &  \\
         &       &      &        &        & Delete   &  \\ \hline \hline
\cite{HM10} & $nH_0(S)+o(n\log\sigma)$ & \multicolumn{3}{|c| }{$O((1+\log\sigma/\log\log n)\lambda)$} & $O((1+\log\sigma/\log\log n)\lambda)$ & W \\ \hline
\cite{NS14} & $nH_0(S)+o(n\log\sigma)$ & \multicolumn{3}{|c|}{$O((\log\sigma/\log\log n)\lambda)$} & $O((\log\sigma/\log\log n)\lambda)$ & W \\
\hline
\cite{NavarroN13}   & $nH_0(S)+o(n\log\sigma)$ & $O(\lambda)$ & $O(\lambda)$ & $O(\lambda)$ & $O(\lambda)$ & A \\ \hline
\cite{NavarroN13}   & $nH_0(S)+o(n\log\sigma)$ & $O(\log n)$ & $O(\lambda)$ & $O(\lambda)$ & $O(\log n)$ & W \\   
\hline
\cite{JanssonSS12} & $nH_k+ o(n\log\sigma) $& - & - & $O(\lambda)$ & $O(\lambda)$ & W \\ \hline
\cite{GrossiRRV13} & $nH_k+ o(n\log\sigma)$ & - & -  &  $O(\lambda)$ &  $O(\lambda)$ & W \\ \hline\hline
New                & $nH_k+ o(n\log\sigma)$ & $O(\lambda)$ & $O(\lambda)$ & $O(\lambda)$ & $O(\lambda)$  & W \\ \hline \hline
  \end{tabular}
}
  \caption{Previous and New Results for Fully-Dynamic Sequences. The rightmost column indicates whether updates are amortized (A) or worst-case (W). We use notation $\lambda=\log n/\log\log n$ in this table.}
  \label{tab:ranksel}
\end{table}


In Section~\ref{sec:ranklog} we describe a data structure that uses $O(\log n)$ bits per symbol and supports $\ra$, $\sel$, and $\acc$ in optimal $O(\log n/\log\log n)$ time. This data structure essentially maintains a linked list $L$ containing 
all symbols of $S$; using some auxiliary data structures on $L$, we can answer  $\ra$, $\sel$, and $\acc$ queries on $S$. 
In Section~\ref{sec:ranklogsigma} we show how the space usage can be reduced to $O(\log \sigma)$ bits per symbol. A compressed data structure that needs $H_0(S)$ bits per symbol is presented in Section~\ref{sec:compr}.  The approach of Section~\ref{sec:compr} is based on dividing $S$ into a number of subsequences. We store a fully-dynamic data structure for only one such subsequence of appropriately small size.  Updates on other subsequences are supported by periodic re-building.
 In Section~\ref{sec:compr2} we show that the space usage can be reduced to $nH_k(S)+o(n\log\sigma)$. 

\section{$O(n\log n)$-Bit Data Structure}
\label{sec:ranklog}
We start by describing a data structure that uses $O(\log n)$ bits per symbol.
\begin{lemma}
  \label{lemma:logsigma0}
A dynamic string $S[1,m]$ for $m\le n$ over alphabet $\Sigma=\{\,1,\ldots,\sigma\,\}$ 
can be stored in a data structure that needs
$O(m\log m)$ bits, and answers queries
$\acc$, $\ra$ and $\sel$ in time $O(\log m/\log \log n)$.
Insertions and deletions of symbols are supported in $O(\log m/\log \log n)$ time. The data structure uses a universal look-up table of size $o(n^{\eps})$ for an arbitrarily small $\eps>0$.
\end{lemma}
\begin{proof}
  We keep elements of $S$ in a list $L$. Each  entry of $L$ contains a symbol $a\in \Sigma$. 
For every $a\in \Sigma$, we also maintain the list $L_a$. Entries of $L_a$ correspond to those entries of $L$ that contain the symbol $a$.
We maintain data structures $D(L)$ and $D(L_a)$ that enable us to find the number 
of entries in $L$ (or in some list $L_a$) that precede 
an entry $e\in L$ (resp.\ $e\in L_a$); we can also find the $i$-th entry $e$ in $L_a$ or $L$ using $D(L_{\cdot})$.
We will prove in Lemma~\ref{lemma:dl} that $D(L)$ needs $O(m\log m)$ bits and supports queries and updates on $L$ in $O(\log m/\log \log n)$ time.

We can answer a query $\sel_a(i,S)$ by finding the $i$-th entry $e_i$ in $L_a$, following the pointer from $e_i$ to the corresponding entry $e'\in L$, and counting the number $v$  of entries preceding $e'$ in $L$. Clearly\footnote{To simplify the description, we assume that a list entry precedes itself.}, $\sel_a(i,S)=v$. To answer a query $\ra_a(i,S)$, we first find the $i$-th entry $e$ in $L$. Then we find  the last entry $e_a$  that precedes $e$ and contains $a$. Such queries can be answered in $O((\log \log \sigma)^2\log \log m)$ time as will be shown in \shlongver{the full version of this paper, attached at the end of this submission}{Lemma~\ref{lemma:colpred} in Section~\ref{sec:markpred}}. If $e'_a$ is the entry that corresponds to $e_a$ in $L_a$, then $\ra_a(i,S)=v$, where $v$ is the number of entries that precede $e_a'$ in $L_a$. 
\end{proof}

\section{$O(n\log \sigma)$-Bit Data Structure}
\label{sec:ranklogsigma}
\begin{lemma}
  \label{lemma:logsigma}
A dynamic string $S[1,n]$ over alphabet $\Sigma=\{\,1,\ldots,\sigma\,\}$ 
can be stored in a data structure using 
$O(n\log\sigma)$ bits, and supporting queries
$\acc$, $\ra$ and $\sel$ in time $O(\log n/\log \log n)$.
Insertions and deletions of symbols 
are supported in $O(\log n/\log \log n)$ time. 
\end{lemma}
\begin{proof}
If $\sigma=\log^{O(1)}n$, then the data structures described in~\cite{NS14} and~\cite{HM10} provide desired query and update times. The case $\sigma=\log^{\Omega(1)}n$ is considered below. 

\tolerance=1000
We show how the problem on a sequence of size $n$ can be reduced to the same problem on a sequence of size $O(\sigma\log n)$. 
The sequence $S$ is divided into chunks.
We can maintain the size $n_i$ of each chunk $C_i$, so that $n_i=O(\sigma\log n)$ and the total number of chunks is bounded by $O(n/\sigma)$. We will show how to maintain chunks in \shlongver{the full version of this paper}{Section~\ref{sec:logsigmaupd}}. 
For each
$a\in \Sigma$, we keep a global bit sequence $B_a$. 
$B_a=1^{d_1}01^{d_2}0\ldots1^{d_i}0\ldots$ where $d_i$ is the number 
of times $a$ occurs in the chunk $C_i$. 
We also keep a bit sequence $B_t=1^{n_1}01^{n_2}0\ldots 1^{n_i}0\ldots$.
We can compute 
$\ra_a(i,S)=v_1+v_2$ where $v_1=\ra_1(\sel_0(j_1,B_a),B_a)$, $j_1= \ra_0(\sel_1(i,B_t),B_t)$, $v_2=\ra_a(i_1,C_{i_2})$,
$i_2=j_1+1$ and $i_1=i-\ra_1(\sel_0(j_1,B_t),B_t)$.     
To answer a query $\sel_a(i,S)$, we first find the index $i_2$
of the chunk $C_{i_2}$ that contains the $i$-th occurrence of $i$,
$i_2=\ra_0(\sel_1(i,B_a),B_a)+1$. Then we find $v_a=\sel_a(C_{i_2},i-i_1)$ for $i_1=\ra_1(\sel_0(i_2-1,B_a),B_a)$;
$v_a$ identifies the position of the $(i-i_1)$-th occurrence of $a$ in the chunk $C_{i_2}$, where $i_1$ denotes the number of $a$'s in the first $i_2-1$ chunks. Finally we compute $\sel_a(i,S)=v_a+s_p$ where $s_p=\ra_1(\sel_0(i_2-1,B_t),B_t)$ 
is the total number of symbols in the first $i_2-1$ chunks. 
We can support queries and updates on $B_t$ and on each $B_a$ 
in $O(\log n/\log \log n)$ time~\cite{NS14}. 
By Lemma~\ref{lemma:logsigma0}, queries and updates on $C_i$ 
are supported in $O(\log \sigma/\log \log n)$ time. 
Hence, the query and update times of our data structure are 
$O(\log n/\log \log n)$.

 $B_t$ can be kept in $O((n/\sigma)\log\sigma)$ bits~\cite{NS14}. The array 
$B_a$ uses $O(n_a\log\frac{n}{n_a})$ bits, where $n_a$ is the number of times $a$ occurs in $S$.  \no{$O((n/\sigma)\log\frac{n\sigma}{n_a})=O((n/\sigma)(\log\frac{n}{n_a}+\log\sigma))$ bits. Hence all $B_a$ need $O(n\log \sigma)
+O(\sum_a(n/\sigma)\log n/n_a)$. By Jensen's inequality, the second term of this expression can be bounded by $O(n\log \sigma)$.} Hence,  all $B_a$ and $B_t$ use $O((n/\sigma)\log\sigma+\sum_an_a\log\frac{n}{n_a})=O(n\log\sigma)$ bits.
By Lemma~\ref{lemma:logsigma0}, we can also keep the data structure for each chunk in $O(\log \sigma+\log\log n)=O(\log\sigma)$ bits per symbol.
\end{proof}

\section{Compressed Data Structure}
\label{sec:compr}
In this Section we describe a data structure that uses $H_0(S)$ bits per symbol. We start by considering the case when the alphabet size is not too large, $\sigma\le n/\log^3 n$.
The sequence $S$ is split into subsequences $S_0$, $S_1$, $\ldots$ $S_r$ for $r=O(\log n/(\log\log n))$. 
The subsequence $S_0$ is stored in $O(\log \sigma)$ bits per element as described in Lemma~\ref{lemma:logsigma}. 
Subsequences $S_1,\ldots S_r$ are substrings of $S\setminus S_0$. $S_1,\ldots S_r$ are stored in compressed static data structures. New elements are always inserted into the subsequence $S_0$.  Deletions from $S_i$, $i\ge 1$, are implemented as lazy deletions: an element in $S_i$ is marked as deleted. We guarantee that the number of elements that are marked as deleted  is bounded by 
$O(n/r)$. If a subsequence $S_i$ contains many elements marked as deleted, it is re-built: we create a new instance of $S_i$ 
that does not contain deleted symbols. If a symbol sequence $S_0$ contains too many elements, we insert the elements of $S_0$ into 
$S_i$ and re-build $S_i$ for $i\ge 1$.
 Processes of constructing a new subsequence and re-building a subsequence with too many obsolete elements are run 
in the background.

\no{As follows from the above high-level description, $S_i$ are subsequences (but not substrings!) of $S$: if an element $e_1\in S_1$ precedes $e_2\in S_2$, then $e_1$ precedes $e_2$ in $S$. But the index of the subsequence $S_j$ that contains an element $e$ 
depends only on the time when it was inserted and does not depend on its position in $S$. Thus  subsequences can be interspersed in an arbitrary way. }

The bit sequence $M$ identifies elements in $S$ that are marked as deleted: $M[j]=0$ if and only if  $S[j]$ is marked as deleted. 
The bit sequence $R$ distinguishes between the elements of $S_0$ and elements of $S_i$, $i\ge 1$: $R[j]=0$ if the $j$-th element of $S$ is kept in $S_0$ and $R[j]=1$ otherwise. 

We further need  auxiliary data structures for answering 
select queries. We start by defining an auxiliary subsequence $\tS$ that contains copies of elements already stored in other subsequences.  
Consider a subsequence $\oS$ obtained by merging subsequences $S_1$, $\ldots$, $S_r$ (in other words, $\oS$ is obtained from $S$ by removing elements of $S_0$). 
Let $S'_a$ be the subsequence obtained by selecting (roughly) every $r$-th occurrence of a symbol $a$ in $\oS$. 
The subsequence $S'$ is obtained by merging  subsequences $S'_a$ for all $a\in \Sigma$.  
Finally $\tS$ is obtained  by merging $S'$ and $S_0$.  
 We support queries $\sel'_a(i,\tS)$ on $\tS$, defined as follows: $\sel'_a(i,\tS)=j$ such that (i) a copy of $S[j]$ is stored in $\tS$ and (ii) if $\sel_a(i,S)=j_1$, then $j\le j_1$ and copies of elements $S[j+1]$, $S[j+2]$, $\ldots$, $S[j_1]$ are not stored in $\tS$. That is, $\sel'_a(i,\tS)$ returns the largest index  $j$, such that  $S[j]$ precedes $S[\sel_a(i,S)]$ and $S[j]$ is also stored in $\tS$. The data structure for $\tS$ delivers approximate 
answers for $\sel$ queries; we will show later how the answer to a query $\sel_a(i,S)$ can be found quickly if the answer to $\sel'_a(i,\tS)$ is known. Queries $\sel'(i,\tS)$ can be implemented using standard operations on a bit sequence of size $O((n/r)\log\log n)$  bits; for completeness, we provide a description in Section~\ref{sec:appselprime}.  
We remark that $\oS$ and $S'$ are introduced to define $\tS$; these two subsequences are not stored in our data structure. 
The bit sequence $\tE$ indicates what symbols of $S$ are also stored in $\tS$:
$\tE[i]=1$ if a copy of $S[i]$ is stored in $\tS$ and $\tE[i]=0$ otherwise. 
The bit sequence $\tB$ indicates what symbols in $\tS$ are actually from $S_0$: $\tB[i]=0$ iff $\tS[i]$ is stored in the subsequence $S_0$. 
Besides, we keep bit sequences $D_a$ for each $a\in \Sigma$. Bits of $D_a$ correspond to occurrences of $a$ in $S$. 
If the $l$-th occurrence of $a$ in $S$ is marked as deleted, 
then $D_a[l]=0$. All other bits in $D_a$ are set to $1$.

We provide the list of subsequences in Table~\ref{tab:auxdata}. Each subsequence is augmented with a data structure that supports rank and select queries.  
For simplicity we will not distinguish between a subsequence and a data structure on its elements. 
If a subsequence supports updates, then either (i) this is a subsequence over 
a small alphabet or (ii) this subsequence contains a small number of elements. 
In case (i), the subsequence is over an alphabet of constant size; by ~\cite{NS14,HM10} queries on such subsequences are answered in $O(\log n/\log \log n)$ time. In case (ii) the subsequence contains $O(n/r)$ elements; data structures on such subsequences are implemented as in Lemma~\ref{lemma:logsigma}. 
All auxiliary subsequences, except for $\tS$, are of type (i). Subsequence $S_0$ and an auxiliary subsequence $\tS$ are of type (ii). Subsequences $S_i$ for $i\ge 1$ are static, i.e. they are stored in data structures that do not support updates. We re-build these subsequences when they contain too many obsolete elements.  Thus dynamic subsequences support $\ra$, $\sel$, $\acc$, and updates in $O(\log n/\log\log n)$ time. It is known that we can implement all basic operations on a static sequence in $O(\log n/\log \log n)$ time\footnote{Static data structures also achieve significantly faster query times, but this is not necessary for our implementation.}. Our data structures on static subsequences are based on the approach of Barbay et al.~\cite{BHMR07}; however, our data structure can be constructed faster when the alphabet size is small and supports a substring extraction operation. A full description will be given in \shlongver{the full version of this paper}{Section~\ref{sec:construct}}. We will show below that queries on $S$ are answered by $O(1)$ queries on dynamic subsequences and $O(1)$ queries on static subsequences.

We also maintain arrays $Size[]$ and $Count_a[]$ for every $a\in \Sigma$. For any $1\le i \le r$, $Size[i]$ is the number of symbols in $S_i$ and $Count_a[i]$ specifies how many times $a$ occurs in $S_i$. We keep a data structure that computes  the sum of the first $i\le r$ entries in $Size[i]$ and find the largest $j$ such that $\sum_{t=1}^j Size[t]\le q$ for any integer $q$. The same kinds of queries are also supported on $Count_a[]$.  Arrays $Size[ ]$ and 
$Count_a[]$ use $O(\sigma\cdot r \cdot \log n)=O(n/\log n)$ bits.

\begin{table}[tb]
  \centering
  \begin{tabular}{|l|c|c|c|} \hline
    Name & Purpose & Alph. & Dynamic/ \\
         &          & Size  & Static\\ \hline
    $S_0$                & Subsequence of $S$ & - & Dynamic \\
    $S_i$, $1\le i\le r$ & Subsequence of $S$ & - & Static \\
    $M$       & Positions of symbols in $S_i$, $i\ge 1$, that are marked as deleted & const & Dynamic \\
    $R$                  & Positions of symbols from $S_0$ in $S$ \no{Relative order of symbols in $S_0$ and $S_i$ for $i\ge 1$} & const & Dynamic\\
    $\tS$                & Delivers an approximate answer to select queries          & -  & Dynamic \\
    $S'_a$, $a\in \Sigma$ & Auxiliary sequences for $\tS$                            & - & Dynamic \\
    $\tE$                & Positions of symbols from $\tS$ in $S$                    & const & Dynamic \\
    $\tB$                & Positions of symbols from $S_0$ in $\tS$  & const & Dynamic \\
    $D_a$                & Positions of symbols marked as deleted among all $a$'s    & const & Dynamic \\
\hline
  \end{tabular}
  \caption{Auxiliary subsequences for answering rank and select queries. A subsequence is dynamic if both insertions and deletions are supported. If a subsequence is static, then updates are not supported. Static subsequences are re-built when they contain too many obsolete elements.}
  \label{tab:auxdata}
\end{table}

\paragraph{Queries.}
To answer a query $\ra_a(i,S)$, we start by computing $i'=\sel_1(i,M)$; $i'$ is the position of the $i$-th element that is not marked as deleted. Then we find $i_0=\ra_0(i',R)$ and $i_1=\ra_1(i',R)$. By definition of $R$, $i_0$ is the number of elements of $S[1..i]$ that are stored in the subsequence $S_0$. 
The number of $a$'s in $S_0[1..i_0]$  is computed as $c_1=\ra_a(i_0,S_0)$.
The number of $a$'s in $S_1,\ldots, S_r$ before the position $i'$ is 
found as follows. We identify the index $t$, such that $\sum_{j=1}^t Size[j]<i_1\le \sum_{j=1}^{t+1}Size[j]$. 
Then we compute how many times $a$ occurred in $S_1,\ldots ,S_t$, $c_{2,1}=\sum_{j=1}^{t}Count_a[j]$, 
and in the relevant prefix of $S_{t+1}$, $c_{2,2}=\ra_a(i_1-\sum_{j=1}^{t}Size[j],S_{t+1})$. 
Let $c_2=\ra_1(c_{2,1}+c_{2,2},D_a)$. Thus $c_2$ is the number of symbols '$a$' that are not marked as deleted among the first $c_{2,1}+c_{2,2}$ occurrences of $a$ in $S\setminus S_0$.
Hence $\ra_a(i,S)=c_1+c_2$.

To answer a query $\sel_a(i,S)$, we first obtain an approximate answer by asking a query $\sel'_a(i,\tS)$. 
Let $i'=\sel_1(i,D_a)$ be the rank of the $i$-th symbol $a$ that is not marked as deleted. Let $l_0=\sel'_a(i',\tS)$. We find $l_1=\ra_1(l_0,\tE)$ and $l_2=\sel_a(\ra_a(l_1,\tS)+1,\tS)$. 
Let $first=\sel_1(l_1,\tE)$ and $last=\sel_1(l_2,\tE)$ be the positions of  $\tS[l_1]$ and $\tS[l_2]$ in $S$.
By definition of $\sel'$, $\ra_a(first,S)\le i$ and $\ra_a(last,S)> i$.  If $\ra_a(first,S)=i$, then obviously $\sel_a(i,S)=first$.
Otherwise the answer to $\sel_a(i,S)$ is an integer between $first$ and $last$.
By definition of $\tS$, the substring $S[first]$, $S[first+1]$, $\ldots$, $S[last]$ contains at most $r$ occurrences of $a$.
All these occurrences are stored in  subsequences $S_j$ for  $j\ge 1$.
We compute $i_0=\ra_a(\ra_0(first,R),S_0)$ and $i_1=i'-i_0$. We find the index $t$ 
such that $\sum_{j=1}^{t-1} Count_a[j]< i_1\le \sum_{j=1}^t Count_a[j]$.
Then $v_1=\sel_a(i_1- \sum_{j=1}^{t-1} Count_a[j], S_t)$ is the position of $S[\sel_a(i,S)]$ in $S_t$. We find its index in $S$ by computing  $v_2=v_1+\sum_{j=1}^{t-1}Size[j]$ and $v_3=\sel_1(v_2,R)$. Finally  $\sel_a(i,S)=\ra_1(v_3,M)$.

Answering an $\acc$ query is straightforward. We determine whether $S[i]$ is stored in $S_0$ or in some $S_j$ for $j\ge 1$ using $R$. Let $i'=\sel_1(i,M)$. If $R[i']=0$ and $S[i]$ is stored in $S_0$, then $S[i]=S_0[\ra_0(i',R)]$. If $R[i']=1$, we compute $i_1=\ra_1(i',R)$  and find the index $j$ such that 
$\sum_{t=1}^{j-1}Size[t] < i_1 \le \sum_{t=1}^j Size[t]$. The answer to $\acc(i,S)$ is $S[i]=S_j[i_2]$ for $i_2=i_1-\sum_{t=1}^{j-1}Size[t]$.

\paragraph{Space Usage.}
The redundancy of our data structure can be estimated as follows. The space needed to keep the symbols that are marked as deleted in subsequences $S_j$ is bounded by $O((n/r)(\log \sigma+\log r))$: Let $\on_a$ denote the number of symbols $a$ that are marked as deleted and let $\on=\sum_a\on_a$. Then all symbols that are marked as deleted use $X=\sum_a\on_a\log\frac{n+\on}{n_a+\on_a}$ bits. Since $\frac{n+\on}{n_a+\on_a}\le \frac{2n}{\on_a}$, $X\le \sum_a\on_a + \sum_a \on_a\log\frac{n}{\on_a}$. If $\on < n/r^2$, $X=o(n)$. If $\on> n/r^2$, then  $X=O(n/r)+ O(n \log r)+\sum_a\on_a\log \frac{\on}{\on_a}= O(\frac{n}{r}(\log\sigma+\log r))$.  $S_0$ also takes $O((n/r)\log \sigma)$ bits.
The bit sequences $R$ and $M$ need $O((n/r)\log r)=o(n)$ bits; $\tB$, $\tE$ also use $O((n/r)\log r)$ bits. 
 Each bit sequence $D_a$ can be maintained in $O(n'_a\log (n_a/n'_a))$ bits where $n_a$ is the total number of symbols $a$ in $S$ and $n'_a$ is the number of symbols $a$ that are marked as deleted. All $D_a$ take $O(\sum_{a\in \Sigma} n'_a \log\frac{n_a}{n'_a})$. To estimate the last expression, we divide the alphabet $\Sigma$ into $\Sigma_1$ and $\Sigma_2$;  $Sigma_1$ contains all symbols $a$ such that $n'_a\ge n_a/\log^2n$ and $\Sigma_2$ contains all symbols $a$, such that $n'_a< n_a/\log^2n$. Then $\sum_{a\in \Sigma}n'_a\log\frac{n_a}{n'_a}=\sum_{a\in \Sigma_1}n'_a\log\frac{n_a}{n'_a}+\sum_{a\in \Sigma_2}n'_a\log\frac{n_a}{n'_a}\le 
(2n/r)\log\log n +(n/\log n)=O((n/r)\log\log n)$. Hence all $D_a$ need 
$O((n/r)\log\log n)=o(n)$ bits.  
The subsequence $\tS$ can be stored in 
$O((n/r)\log \sigma)$ bits. Thus all auxiliary subsequences use 
$O((n/r)(\log\sigma + \log r))=O(n\frac{\log\sigma\log\log n}{\log n})$ bits. 
Data structures for subsequences $S_i$, $r\ge i\ge 1$, use  $\sum_{i=1}^r(n_iH_k(S_i)+o(n_i\log\sigma))=nH_k(S\setminus S_0)+o(n\log\sigma)$ bits for any $k=o(\log_{\sigma}n)$, 
where $n_i$ is the number of symbols in $S_i$.  
Since $H_k(S)\le H_0(S)$ for $k\ge 0$, all subsequences $S_i$ are stored in $nH_0(S)+o(n\log\sigma)$ bits. 


\paragraph{Updates.}
When a new symbol is inserted, we insert it into the subsequence $S_0$ and update the sequence $R$. 
The data structure for $\tS$ is also updated accordingly. We also insert a $1$-bit at the appropriate position of bit sequences $M$ and $D_a$ where $a$ is the inserted symbol. Deletions from $S_0$ are symmetric.  
When an element is deleted from $S_i$, $i\ge 1$, we replace the $1$-bit corresponding to this element in $M$  with a $0$-bit. We also  change the appropriate bit in $D_a$ to $0$, where $a$ is the symbol that was deleted from $S_i$.

We must guarantee that the number of elements in $S_0$ is bounded by $O(n/r)$; the number of elements marked as deleted must be also bounded by $O(n/r)$. Hence we must re-build the data structure when the number of symbols in $S_0$ or the number of deleted symbols is too big. Since we aim for updates with worst-case bounds, the cost of re-building is distributed among $O(n/r)$ updates. We run two processes in the background. The first background process moves elements of $S_0$ into subsequences $S_i$. The second process purges sequences $S_1$, $\ldots$, $S_r$ and removes all symbols marked as deleted from these sequences. Details are given in Section~\ref{sec:updatesbackground}.

We assumed in the description of updates that $\log n$ is fixed. In the general case we need additional background processes that increase or decrease sizes of subsequences when $n$ becomes too large or too small. These processes are organized in a standard way. \no{and will be described in the full version of this paper.}
Thus we obtain the following result
\begin{lemma}
  \label{lemma:midsigma}
A dynamic string $S[1,n]$  over alphabet $\Sigma=\{\,1,\ldots,\sigma\,\}$ for $\sigma<n/\log^3 n$
can be stored in a data structure that uses $nH_0 + O(n\frac{\log\sigma\log\log n}{\log n})+ O(n(\log\log \sigma)^3)$ bits and answers queries
$\acc$, $\ra$ and $\sel$ in time $O(\log n/\log \log n)$.
Insertions and deletions of symbols are supported in $O(\log n /\log \log n)$ time. 
\end{lemma}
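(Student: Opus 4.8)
The construction is already spelled out in this section, so the plan is to verify that it simultaneously meets the three requirements — query time, update time, and space — and then to remove the simplifying assumption that $\log n$ is fixed. Concretely: $S$ is partitioned into the dynamic subsequence $S_0$ and the static subsequences $S_1,\ldots,S_r$ with $r=\Theta(\log n/\log\log n)$, together with the auxiliary structures $M$, $R$, $\tS$, the $S'_a$, $\tE$, $\tB$, the $D_a$, and the arrays $Size[\,]$ and $Count_a[\,]$ of Table~\ref{tab:auxdata}. I would address the three requirements in turn.

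For the query bounds I would walk through the three procedures given above. Each of $\ra_a(i,S)$, $\sel_a(i,S)$, $\acc(i,S)$ is reduced to a constant number of rank/select/access calls on the subsequences plus a constant number of aggregate queries (prefix sum and "largest $j$ with prefix sum $\le q$") on $Size[\,]$ and $Count_a[\,]$. Every dynamic subsequence used is of type (i) — constant alphabet, handled by~\cite{NS14,HM10} in $O(\log n/\log\log n)$ time — or of type (ii), namely $S_0$ and $\tS$ with $O(n/r)$ elements, handled by Lemma~\ref{lemma:logsigma0} in $O(\log(n/r)/\log\log n)=O(\log n/\log\log n)$ time; the static subsequences $S_i$ answer in $O(\log n/\log\log n)$ time; and the $O(\sigma r)$-word arrays support the required aggregate queries in $O(\log n/\log\log n)$ time by a standard balanced-tree structure over blocks. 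Hence every query costs $O(\log n/\log\log n)$.

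For the space bound I would collect the estimates of the "Space Usage" paragraph. The static parts cost $\sum_{i=1}^r\bigl(n_iH_k(S_i)+o(n_i\log\sigma)\bigr)=nH_k(S\setminus S_0)+o(n\log\sigma)$ for any $k=o(\log_\sigma n)$, and using $H_k\le H_0$ together with convexity of $x\log(n/x)$ this is at most $nH_0(S)+o(n\log\sigma)$, where the $o(n\log\sigma)$ term is itself $O(n\log\sigma/\log n)$ from the static construction and is absorbed into the stated bound. The sequences $R$, $M$, $\tB$, $\tE$ take $O((n/r)\log r)=o(n)$ bits; $S_0$ and $\tS$ take $O((n/r)\log\sigma)$ bits; the $D_a$ take $O((n/r)\log\log n)$ bits by the $\Sigma_1/\Sigma_2$ split; and the arrays take $O(n/\log n)$ bits. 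This sums to $O\!\bigl(n\tfrac{\log\sigma\log\log n}{\log n}\bigr)$ on top of $nH_0$, while the look-up tables and the $O(\log\log n + (\log\log\sigma)^3)$-per-symbol lower-order terms inherited from Lemma~\ref{lemma:logsigma0} contribute the $O(n(\log\log\sigma)^3)$ term. A point requiring care is that this redundancy is only valid while the invariants "$|S_0|=O(n/r)$" and "number of deleted symbols $=O(n/r)$" hold, so the update analysis must enforce them.

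The main obstacle is precisely achieving those invariants with \emph{worst-case} update time. Here I would argue: an insertion always goes into $S_0$ and costs $O(\log n/\log\log n)$ — one insertion into each of $S_0$, $R$, $M$, $\tS$, $\tB$, $\tE$, and the relevant $D_a$ — and a deletion from some $S_i$, $i\ge1$, is lazy, flipping one bit of $M$ and one of $D_a$; deletions from $S_0$ are symmetric to insertions. To keep $|S_0|$ and the count of deleted symbols $O(n/r)$ we run two background processes, advanced by a constant amount per update (details in Section~\ref{sec:updatesbackground}): one drains $S_0$ into the $S_i$'s, the other rebuilds an $S_i$ that has accumulated too many deleted entries. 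Each rebuild touches $O(n/r)$ symbols, and the small-alphabet static construction of Section~\ref{sec:construct} builds such a subsequence in $O\bigl((n/r)\log\sigma/\log n+\cdots\bigr)$ word operations, so amortizing a rebuild over the next $\Theta(n/r)$ updates costs $O(\log n/\log\log n)$ per update while the invariants never degrade. Finally, the assumption that $\log n$ is fixed is removed by the usual global-rebuilding trick: maintain the structure for a parameter $\on$ with $\on/2\le n\le2\on$, and when $n$ leaves this range rebuild in the background for the new value; this changes nothing asymptotically. Combining the query, space, and update parts yields the lemma.
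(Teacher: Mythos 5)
Your proposal follows the paper's own route: the lemma is obtained exactly by the decomposition into $S_0,S_1,\ldots,S_r$ with the auxiliary sequences of Table~\ref{tab:auxdata}, the constant number of rank/select calls per query, the space accounting of the ``Space Usage'' paragraph, and background re-building; your query and space parts match the paper essentially step for step. One small misattribution: the $O(n(\log\log\sigma)^3)$ term does not come from the look-up tables of Lemma~\ref{lemma:logsigma0}; it is the per-block redundancy of the static structures of Section~\ref{sec:construct} (the arrays $R$ and the sets $F_a$ with $t=\log\sigma/(\log\log\sigma)^3$) used for $S_1,\ldots,S_r$.

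The one step that would fail as written is your schedule for the background processes. You amortize the rebuild of a single $S_i$ (work $O(n/r)$) over the next $\Theta(n/r)$ updates. That is internally inconsistent -- it would give $O(1)$ work per update, not the $O(\log n/\log\log n)$ you state -- and, more importantly, it does not preserve the invariants: a full drain of $S_0$ (or a full purge of marked-deleted symbols) must rebuild \emph{all} $r$ static subsequences, so at your rate a complete pass would span $\Theta(n)$ updates, during which $|S_0|$, respectively the number of deleted symbols, could grow to $\Theta(n)$; this destroys both the $O((n/r)\log\sigma)$ redundancy terms and the query reductions. The schedule the paper uses (Section~\ref{sec:updatesbackground}) distributes the \emph{entire} pass, of total work $O(n)$, over $n/4r$ updates -- equivalently, each individual $S_i$ is finished within roughly $n/4r^2$ updates -- which is what yields the $O(r)=O(\log n/\log\log n)$ worst-case cost per update and keeps both quantities below $n/r$. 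With that bookkeeping fix your argument coincides with the paper's proof.
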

\shortver{In the full version of this paper we show that the space usage of the above described data structure can be reduced to $nH_k + o(n\log\sigma)$ bits. We also show how the result of Lemma~\ref{lemma:midsigma} can be extended to the case when $\sigma\ge n/\log^3 n$.  The full version also contains the description of the static data structure and presents the procedure for extracting a substring $S[i..i+\ell]$ of $S$ in 
$O(\log n/\log \log n +\ell)$ time.}

\subsection{Compressed Data Structure for $\sigma > n/\log^3 n$}
\label{sec:bigsigma}
If the alphabet size $\sigma$ is almost linear, we cannot afford storing the arrays 
$Count_a[]$. 
Instead, we keep a bit sequence $BCount_a$ for each alphabet symbol $a$.
Let $s_{a,i}$ denote the number of $a$'s occurrences in the subsequence $S_i$ and $s_a=\sum_{i=1}^r s_{a,i}$. Then  $BCount_a=1^{s_{a,1}}01^{s_{a,2}}0\ldots 1^{s_{a,r}}$. 
If $s_a<r\log^2n$,we can keep $BCount_a$ in $O(s_a\log\frac{r + s_a}{s_a})= O(s_a\log\log n)$ bits.
If $s_a>r\log^2 n$, we can keep $BCount_a$ in $O(r\log\frac{r + s_a}{s_a})= O((s_a/\log^2n)\log n)=O(s_a/\log n)$ bits.
Using $BCount_a$, we can find for any $q$ the subsequence $S_j$, such that $Count_a[j]< q \le Count_a[j+1]$ in $O(\log n/\log \log n)$ time.

We also keep an effective alphabet\footnote{An alphabet for $S_j$ is effective if it contains only symbols that actually occurred in $S_j$.} for each $S_j$. We keep 
a bit vector $Map_j[]$ of size $\sigma$, such that $Map_j[a]=1$ if and only if $a$ occurs in $S_j$. Using $Map_j[]$, we can map a symbol $a\in [1,n]$ to a symbol $map_j(a)=\ra_1(a,Map_j)$ 
so that $map_j(a)\in [1,|S_j|]$ for any $a$ that occurs in $S_j$. Let $\Sigma_j=\{\, map_j(a)\,|\, a\text{ occurs in }S_j\,\}$. For every $map_j(a)$ we can find the corresponding symbol $a$ 
using a $\sel$ query on $Map_j$. We keep a static data structure for each sequence $S_j$ over $\Sigma_j$.
Queries and updates are supported in the same way as in Lemma~\ref{lemma:midsigma}. Combining the result of this sub-section and Lemma~\ref{lemma:midsigma}, we obtain 
the data structure for an arbitrary alphabet size.
\begin{theorem}
  \label{theor:anysigma}
A dynamic string $S[1,n]$  over alphabet $\Sigma=\{\,1,\ldots,\sigma\,\}$ 
can be stored in a data structure that uses $nH_0 + O(n\frac{\log\sigma\log\log n}{\log n})+ O(n(\log\log \sigma)^3)$ bits and answers queries $\acc$, $\ra$ and $\sel$ in time $O(\log n/\log \log n)$.
Insertions and deletions of symbols are supported in $O(\log n /\log \log n)$ time. 
\end{theorem}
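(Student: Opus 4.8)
The plan is to split on the size of the alphabet. When $\sigma<n/\log^3 n$ the statement is \emph{exactly} Lemma~\ref{lemma:midsigma}, so there is nothing to do. All the work is in the regime $\sigma\ge n/\log^3 n$, where $\log\sigma=\Theta(\log n)$, so the target redundancy $O\!\left(n\frac{\log\sigma\log\log n}{\log n}\right)+O(n(\log\log\sigma)^3)$ is just $O(n(\log\log n)^3)=o(n\log\sigma)$; I only need to exhibit a structure for this case that fits that budget while keeping all query and update times $O(\log n/\log\log n)$.

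For $\sigma\ge n/\log^3 n$ I would reuse verbatim the decomposition of Section~\ref{sec:compr}: the subsequences $S_0,S_1,\dots,S_r$ with $r=\Theta(\log n/\log\log n)$, the background re-building that keeps $|S_0|$ and the number of lazily-deleted elements at $O(n/r)$, the auxiliary subsequences $M,R,\tS,\{S'_a\},\tE,\tB,\{D_a\}$, and the query algorithms for $\acc$, $\ra$, $\sel$ exactly as written there, including their update procedures. Only one component of that construction fails for a near-linear alphabet: the arrays $Count_a[\,]$, which would occupy $\Theta(\sigma r\log n)$ bits. I would replace each of them by the bit sequence $BCount_a=1^{s_{a,1}}0\cdots 0\,1^{s_{a,r}}$ of Section~\ref{sec:bigsigma}, kept in the entropy-compressed dynamic bit-vector representation of~\cite{NS14}. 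As computed there, a symbol with $s_a\ge 1$ costs $O(s_a\log\log n)$ bits and a symbol with $s_a=0$ costs $O(\log r)$ bits; since $\sum_a s_a=O(n)$ and there are at most $\sigma=O(n)$ symbols, the total is $O(n\log\log n)$ bits. Every partial-sum and search query on $Count_a[\,]$ used in the $\ra$ and $\sel$ algorithms (finding the chunk index $t$, and forming $\sum_{j\le t}Count_a[j]$) becomes a constant number of $\ra/\sel$ operations on $BCount_a$, answered in $O(\log n/\log\log n)$ time; the array $Size[\,]$ has only $r$ entries and still fits in $o(n)$ bits, so it is kept as is.

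The one remaining issue is that the static subsequences $S_j$, $j\ge 1$, are over the full alphabet $[1,\sigma]$; I would store each $S_j$ over its effective alphabet, recording the map in a length-$\sigma$ bit vector $Map_j$ with $|\Sigma_j|$ ones, again in the compressed representation of~\cite{NS14}. This costs $O(|\Sigma_j|\log(\sigma/|\Sigma_j|))=O(|S_j|\log(\sigma/|S_j|))$ bits per subsequence, hence $O(n\log\log n)$ bits over all $j$ since $\sum_j|S_j|=O(n)$; a symbol is translated into $\Sigma_j$ by one $\ra_1$ on $Map_j$ and back by one $\sel_1$, so query and update times are unchanged, and $H_0(S_j)$ is not affected by dropping zero-frequency symbols, so by the log-sum inequality $\sum_j|S_j|H_0(S_j)\le nH_0(S)$ still holds and the $S_j$ occupy $nH_0(S)+o(n\log\sigma)$ bits. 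With these two substitutions every query is still resolved by $O(1)$ operations on dynamic subsequences and $O(1)$ on static ones, and the update procedure is unchanged because $BCount_a$ and $Map_j$ are rewritten only when the corresponding $S_j$ is (re-)built in the background, i.e.\ $\Theta(n/r)$ updates apart; adding the $O(n\log\log n)$ overhead to the bound of Lemma~\ref{lemma:midsigma} and using $\log\sigma=\Theta(\log n)$ gives the stated space. I expect the only real obstacle to be the space book-keeping for the $\Theta(\sigma)$ tiny objects $BCount_a$ and $Map_j$: one must store them in a layout whose per-object overhead (length fields, pointers, look-up-table references) is $o(\log\log n)$ amortized so that having $\Theta(\sigma)=\Theta(n)$ of them does not break the budget — this is precisely the ``many small dynamic structures'' situation already dealt with for the $D_a$ and $B_a$ sequences earlier in the paper, and it is handled in the same way via the shared $o(n^\eps)$-size universal tables.
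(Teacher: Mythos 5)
Your proposal is correct and follows essentially the same route as the paper: for $\sigma<n/\log^3 n$ the theorem is exactly Lemma~\ref{lemma:midsigma}, and for larger $\sigma$ the paper likewise replaces the arrays $Count_a[\,]$ by the compressed bit sequences $BCount_a$ and stores each static $S_j$ over its effective alphabet via the bit vectors $Map_j$, answering queries and performing updates exactly as in Lemma~\ref{lemma:midsigma}. Your explicit space accounting for the $BCount_a$ and $Map_j$ structures only spells out bounds the paper leaves implicit, so there is no substantive difference.
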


\section{Compressed Data Structure II}
\label{sec:compr2}
By slightly modifying the data structure of Theorem~\ref{theor:anysigma} we can reduce the space usage to essentially $H_k(S)$ bit per symbol for any $k=o(\log_{\sigma}n)$  simultaneously. First, we observe that any sub-sequence $S_i$ for $i\ge 1$ is kept in a data structures that consumes $H_k(S_i)+o(|S_i|\log \sigma)$ bits of space. Thus all $S_i$ use $\sum_{i=1}^r(n_iH_k(S_i)+o(n_i\log\sigma))=nH_k(S\setminus S_0)+o(n\log\sigma)$ bits.  It can be shown that 
$nH_k(S\setminus S_0)=nH_k(S)+O(n (1+\frac{\log n}{r}))$ bits; we prove this bound in Section~\ref{sec:anal}.  Since $r=O(\log n/\log\log n)$, the data structure of Theorem~\ref{theor:anysigma} uses $nH_k + O(n) +O(n\log\log n)+O(n(\log\log \sigma)^3)$ bits. 

In order to get rid of the $O(n\log\log n)$ additive term, we use a different static data structure;  our static data structure is described in Section~\ref{sec:construct}. As before, the data structure for a sequence $S_i$ uses  $|S_i|H_k + o(|S_i|\log\sigma)$ bits.   But we also show in Section~\ref{sec:construct} that our static data structure can be constructed in $O(|S_i|/\log^{1/6}n )$ time if the alphabet size $\sigma$ is sufficiently small, $\sigma\le 2^{\log^{1/3}n}$.   The space usage $nH_k(S)+o(n\log\sigma)$ can be achieved by appropriate change of the parameter $r$.  
If $\sigma> 2^{\log^{1/3}n}$, we use the data structure of Theorem~\ref{theor:anysigma}.  As explained above, the space usage is $nH_k+o(n\log\sigma)+O(n\log\log n)=nH_k+o(n\log\sigma)$.
If $\sigma\le 2^{\log^{1/3}n}$ we  also use the data structure of Theorem~\ref{theor:anysigma}, but we set $r=O(\log n \log\log n)$ and implement static data structures as in  Section~\ref{sec:construct}. The data structure needs 
$nH_k(S)+O(n/\log\log n)+O(n(\log\log \sigma)^3)=nH_k(S)+o(n\log\sigma)$ bits. Since we can re-build a static data structure for a sequence $S_i$ in $O(|S_i|\log^{1/6}n)$ time, background processes incur an additional cost of $O(\log n/\log\log n)$. Hence the cost of updates does not increase.
 \begin{theorem}
  \label{theor:anysigma2}
A dynamic string $S[1,n]$  over alphabet $\Sigma=\{\,1,\ldots,\sigma\,\}$ 
can be stored in a data structure that uses $nH_k + O(n\frac{\log\sigma\log\log n}{\log n})+ O(n(\log\log \sigma)^3)$ bits and answers queries $\acc$, $\ra$ and $\sel$ in time $O(\log n/\log \log n)$.
Insertions and deletions of symbols are supported in $O(\log n /\log \log n)$ time. 
\end{theorem}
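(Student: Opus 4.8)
The statement of Theorem~\ref{theor:anysigma2} is essentially a bookkeeping refinement of Theorem~\ref{theor:anysigma}: replace the $nH_0$ term by $nH_k$ while keeping the query and update times at $O(\log n/\log\log n)$. The plan is to reuse the entire machinery of Section~\ref{sec:compr} verbatim (the split $S = S_0 \cup S_1 \cup \cdots \cup S_r$, the bit sequences $M,R,\tE,\tB,D_a$, the approximate-select subsequence $\tS$, and the arrays $Size[]$, $Count_a[]$ or their bit-sequence analogues $BCount_a$), changing only the static data structure used for the subsequences $S_i$ with $i\ge 1$, together with a recomputation of the redundancy. The key observation already recorded in the excerpt is that each static $S_i$ can be stored in $|S_i|H_k(S_i) + o(|S_i|\log\sigma)$ bits, so that $\sum_{i=1}^r \bigl(n_i H_k(S_i) + o(n_i\log\sigma)\bigr) = nH_k(S\setminus S_0) + o(n\log\sigma)$.

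First I would establish the entropy-of-a-subsequence bound $nH_k(S\setminus S_0) = nH_k(S) + O\!\bigl(n(1 + \tfrac{\log n}{r})\bigr)$, deferring the detailed estimate to Section~\ref{sec:anal}: the point is that deleting the $O(n/r)$ symbols of $S_0$ from $S$ changes at most $O(n/r)$ contexts of length $k$, and within each affected context the $H_0$ contribution can shift by $O(\log n)$ bits per disturbed symbol, plus a global $O(n)$ slack from Jensen-type inequalities; since each of the $O(n/r)$ removed symbols sits at the boundary of at most $k+1$ contexts and $k = o(\log_\sigma n)$, the total perturbation is $O(n/r \cdot \log n) + O(n)$. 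Plugging $r = \Theta(\log n/\log\log n)$ from Theorem~\ref{theor:anysigma} gives $nH_k(S) + O(n\log\log n) + O(n(\log\log\sigma)^3)$, which is not yet $nH_k + o(n\log\sigma)$ when $\sigma$ is small.

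The second and main step is to eliminate the stubborn $O(n\log\log n)$ additive term, which comes from the $O((n/r)\log r)$-type redundancies being multiplied back up by $r$. The device is a case split on $\sigma$. For $\sigma > 2^{\log^{1/3} n}$ we have $\log\sigma > \log^{1/3} n$, so $O(n\log\log n) = o(n\log\sigma)$ and Theorem~\ref{theor:anysigma}'s data structure already suffices with no change. For $\sigma \le 2^{\log^{1/3} n}$ we instead take $r = \Theta(\log n\log\log n)$, which shrinks the $O((n/r)\cdot \mathrm{poly}(\log r))$ terms enough that they become $O(n/\log\log n) = o(n\log\sigma)$, and we replace the static structure for each $S_i$ by the one of Section~\ref{sec:construct}, which still achieves $|S_i|H_k + o(|S_i|\log\sigma)$ space but — crucially — can be (re)built in $O(|S_i|/\log^{1/6} n)$ time when $\sigma \le 2^{\log^{1/3} n}$. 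One then re-examines the background-rebuilding argument from Section~\ref{sec:updatesbackground}: a rebuild of $S_i$ costs $O(|S_i|\log^{1/6} n)$ wait, $O(|S_i|/\log^{1/6} n)$, and is amortized over $\Theta(n/r)$ updates, contributing $O(n/(r\log^{1/6}n)) \cdot (\text{something}) = O(\log n/\log\log n)$ per update; so update time does not increase despite the larger $r$.

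The final step is to reassemble the redundancy bookkeeping in each case and check every auxiliary structure still fits: the sequences $M,R,\tE,\tB$ cost $O((n/r)\log r)$ bits, each $D_a$ costs $O((n/r)\log\log n)$ bits in aggregate over $a$ (the argument splitting $\Sigma$ into symbols with many vs.\ few deleted occurrences, exactly as in Section~\ref{sec:compr}), $\tS$ and $S_0$ cost $O((n/r)\log\sigma)$ bits, and for $\sigma > n/\log^3 n$ the $BCount_a$ and effective-alphabet machinery of Section~\ref{sec:bigsigma} carries over unchanged. Summing, the redundancy is $O\!\bigl(n\tfrac{\log\sigma\log\log n}{\log n}\bigr) + O(n(\log\log\sigma)^3)$ as claimed. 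I expect the main obstacle to be the interplay in the small-$\sigma$ case between enlarging $r$ (needed to make the additive redundancies $o(n\log\sigma)$) and keeping the background rebuild cost at $O(\log n/\log\log n)$ per update — this is exactly where the fast $O(|S_i|/\log^{1/6}n)$ construction time of the Section~\ref{sec:construct} static structure is indispensable, and getting the amortization constants to line up with the chosen $r = \Theta(\log n\log\log n)$ is the delicate part. The entropy perturbation bound $nH_k(S\setminus S_0) = nH_k(S) + O(n(1+\log n/r))$ is conceptually routine but must be stated carefully since $S\setminus S_0$ is a subsequence, not a contiguous substring, so the context-boundary counting has to be done per original position rather than per block.
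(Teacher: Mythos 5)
Your proposal is correct and follows essentially the same route as the paper: the entropy-perturbation bound $nH_k(S\setminus S_0)=nH_k(S)+O(n(1+\tfrac{\log n}{r}))$ deferred to Section~\ref{sec:anal}, the case split at $\sigma=2^{\log^{1/3}n}$, the choice $r=\Theta(\log n\log\log n)$ in the small-alphabet case, and the $O(|S_i|/\log^{1/6}n)$-time construction of the static structure of Section~\ref{sec:construct} to keep the background rebuilds within $O(\log n/\log\log n)$ per update. The only slip is cosmetic: the stubborn $O(n\log\log n)$ term comes from the $n\log n/r$ entropy-perturbation term (as your first paragraph correctly derives), not from the $O((n/r)\log r)$ auxiliary redundancies.
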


\section{Substring Extraction}
\label{sec:substr0}
Our representation of compressed sequences also enables us to retrieve a substring $S[i..i+\ell-1]$ of $S$.
The static data structure, described in Section~\ref{sec:construct} supports substring extraction in $O(\log n/\log\log n +\ell/\log_{\sigma} n)$ time. Hence we can quickly retrieve a substring of any $S_i$. We can also augment $S_0$ with an $O((n/r)\log \sigma)$ additional bits, so that a substring of $S_0$ is extracted in the same time.
We can retrieve a substring of $S$ by extracting a substring of $S_0$ and a substring of some $S_i$ for $i\ge 1$ and merging the result. A detailed description is provided in Section~\ref{sec:substr}.  Our  result can be summed up as follows.
\begin{theorem}
  \label{theor:substr}
We can augment  data structures described in Theorem~\ref{theor:anysigma} and Theorem~\ref{theor:anysigma2} with 
$O((n/r) \log\sigma)$ additional bits, so that a substring of length $ell$ can be extracted in $O((\log n/\log\log n)+ell/\log_{\sigma}n)$ time. The parameter $r=\Omega(\log n/\log\log n)$ is defined in the same way as in Theorems~\ref{theor:anysigma} and~\ref{theor:anysigma2}. 
\end{theorem}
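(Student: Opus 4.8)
The plan is to reduce extracting $S[i..i+\ell-1]$ to a substring extraction on $S_0$ together with a substring extraction on a contiguous stretch of $\oS=S\setminus S_0$, the latter being a concatenation of a suffix of one static block $S_{j_1}$, a run of whole blocks $S_{j_1+1},\ldots,S_{j_2-1}$, and a prefix of one block $S_{j_2}$, and then to re-interleave and compact the pieces. The static blocks already support substring extraction in $O(\log n/\log\log n+\ell'/\log_\sigma n)$ time by Section~\ref{sec:construct}, so the only preparatory step is to give $S_0$ the same ability: on top of the list of Lemma~\ref{lemma:logsigma0} we group the symbols of $S_0$ into $\Theta(\log_\sigma n)$-symbol super-blocks packed into machine words, maintained by split/merge under insertions and deletions. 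This adds only $O(|S_0|\log\sigma)=O((n/r)\log\sigma)$ bits and lets us reach the super-block holding $S_0[p]$ in $O(\log n/\log\log n)$ time and then stream $\ell'$ further symbols in $O(1+\ell'/\log_\sigma n)$ time.

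Given the query I would first map it into the physical layout: $p_1=\sel_1(i,M)$ and $p_2=\sel_1(i+\ell-1,M)$ are the physical endpoints, and ranks on $R$ split $[p_1,p_2]$ into a contiguous $S_0$-index range $[a_1,a_2]$ (with $a_2-a_1<\ell$, since $S_0$ never carries marked-deleted symbols) and a contiguous $\oS$-index range $[b_1,b_2]$; a prefix-sum query on $Size[\,]$ turns $[b_1,b_2]$ into the block decomposition above (for near-linear $\sigma$ each extracted block is stored over the effective alphabet $\Sigma_j$, so its symbols are mapped back through a $\sel$ on $Map_j$). I then extract $S_0[a_1..a_2]$ and each block piece, interleave the $S_0$-stream with the $\oS$-stream according to the corresponding window of $R$, and drop the positions flagged by $M$; the interleaving and the compaction are each carried out $\Theta(\log_\sigma n)$ symbols at a time with a universal look-up table, so they cost time linear in the number of symbols handled divided by $\log_\sigma n$, plus $O(1)$ per piece. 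The navigation itself is a constant number of rank/select queries plus one prefix-sum query, i.e.\ $O(\log n/\log\log n)$, and each of the at most $2+(j_2-j_1-1)$ block extractions costs $O(\log n/\log\log n+(\text{piece length})/\log_\sigma n)$. Since the static blocks all have size $\Theta(n/r)$, any whole block contributes $\Omega(n/r)$ live symbols that lie inside $[i,i+\ell-1]$, so at most $O(\ell r/n)$ whole blocks are involved; their navigation overhead $O((1+\ell r/n)\log n/\log\log n)$ stays within budget because $(r/n)(\log n/\log\log n)=(\log n/\log\log n)^2/n=o(1/\log_\sigma n)$, and (using that the rebuilding threshold keeps the marked-deleted symbols of each block to $O(n/r)$) the marked-deleted symbols inside all whole blocks sum to $O(\ell)$.

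The step I expect to be the main obstacle is controlling the two boundary blocks $S_{j_1},S_{j_2}$: there we extract only a suffix, resp.\ prefix, and the marked-deleted symbols of that block could in principle cluster in exactly the extracted part, so naively extracting the raw sub-range of $S_{j_1}$ can cost $O((n/r)/\log_\sigma n)$ even when $\ell$ is tiny, and naively walking the live runs with $\sel$ on $M$ can cost one $O(\log n/\log\log n)$ jump per deleted run. Making this part stay within $O(\log n/\log\log n+\ell/\log_\sigma n)$ — so that the bound does not pick up a term depending on the global $O(n/r)$ deletion budget — is the real work, and I expect it to require exploiting the rebuilding schedule (the global $O(n/r)$ cap, the per-block thresholds, and the rotation order of the background purge of Section~\ref{sec:updatesbackground}) together, since no single one of these invariants seems to give it alone. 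Everything else — augmenting $S_0$, the arithmetic on $M$, $R$, $Size[\,]$, and the look-up-table packing/interleaving — is routine given the static extraction primitive of Section~\ref{sec:construct}.
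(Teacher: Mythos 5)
Your preparatory steps match the paper's: the word-packed copy of $S_0$ (the paper's $S_w$), the decomposition of the physical range via $M$, $R$ and prefix sums on $Size[\,]$, and the word-wise interleaving of the $S_0$-stream with the static stream through a window of $R$ and a universal look-up table are all exactly what Section~\ref{sec:substr} does. But the part you flag as ``the main obstacle'' and leave open is in fact the heart of the paper's proof, and your proposed way around it does not work. The paper does \emph{not} get the bound from the rebuilding schedule: between purges an adversary can concentrate essentially all of the $\Theta(n/r)$ permitted lazy deletions inside one block $S_t$, or even inside a window of $S_t$ of length $o(n/r)$, so no invariant about per-block deletion counts or about the rotation order of the background purge prevents a query with tiny $\ell$ from having its physical range $[\sel_1(i,M),\sel_1(i+\ell-1,M)]$ straddle a huge cluster of deleted symbols. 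For the same reason your whole-block accounting is unjustified: a block lying entirely inside the physical range need not contribute $\Omega(n/r)$ live symbols (it can be almost entirely marked deleted), and the deleted symbols inside the touched blocks sum only to $O(n/r)$, not to $O(\ell)$.

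What the paper adds to close exactly this gap is a dedicated auxiliary structure per static block: a meta-symbol of $S_i^M$ is declared \emph{spoiled} when at least $\log_{\sigma}n/4$ of its symbols are marked deleted; a bit sequence $SPOS_i$ records spoiled positions, and the live symbols of spoiled meta-symbols are re-packed, in order, into words of a sequence $V_i$ (with word boundaries recorded in $VPOS_i$ and a $next$ pointer from each word of $V_i$ back into $S_i^M$). Extraction then alternates between decoding non-spoiled meta-symbols of $S_i^M$ (each at least half live, so a constant fraction of useful output per $O(1)$ word operation) and streaming words of $V_i$ (which contain only live symbols), with one $O(\log n/\log\log n)$ rank/select to enter $V_i$. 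This makes the cost proportional to the number of live symbols reported, independently of how the $O(n/r)$ deletions are clustered, and it fits in the stated $O((n/r)\log\sigma)$ extra bits. Without this (or an equivalent mechanism that stores the surviving symbols of heavily-deleted regions in packed form), your argument cannot reach the claimed $O(\log n/\log\log n+\ell/\log_{\sigma}n)$ bound, so the proposal as it stands is incomplete precisely at the step the theorem is really about.
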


\bibliographystyle{abbrv}
\bibliography{dynrank}

\newpage
\appendix
\renewcommand\thesection{A.\arabic{section}}

\section{Colored Predecessor Queries}
\label{sec:markpred}
In this section we consider predecessor queries on a linked list, called  colored predecessor queries.  The result of this section  is used in the proof of Lemma~\ref{lemma:logsigma0}.
Suppose that each entry in an ordered list $L$ is \emph{colored} with a symbol $a\in \Sigma$ from an alphabet $\Sigma=\{\,1,\ldots,\sigma\,\}$. We will also sometimes say that 
an entry $e$ contains a symbol $a$. 
A colored predecessor query $(e_q,a)$ for an entry $e_q\in L$ and a symbol $a\in \Sigma$ asks for the rightmost entry $e\in L$ 
that is   colored with $a$ and precedes $e_q$. We consider the problem of answering colored predecessor queries on a dynamic list $L$. 
This problem was previously considered by Kopelowitz~\cite{Kopelowitz12} who described  a randomized $O(\log\log m +\log\log \sigma)$-time solution.
Mortensen~\cite{Mor03} described an $O(\log\log m)$ time solution for the case $\sigma=\log^cn$ and a  constant $c$. 
We present here a deterministic solution for an arbitrarily large alphabet. This result is also of independent interest.

We start by describing a data structure that uses  more than linear space. Then we will show how the space usage can be reduced to linear and how the update time can be decreased. 
\begin{lemma}
  \label{lemma:colpred0}
Let $L$ be a list with $m\le n$ entries. There exists an $O(m\log^2 m)$-bit data structure that answers colored predecessor queries on $L$ in $O(\log \log m(\log\log \sigma)^2)$ time and supports insertions and deletions in $O(\log m)$ time.
\end{lemma}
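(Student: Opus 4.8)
The plan is to build the data structure in two layers: a top structure that handles the \emph{list-order} component and a bottom collection of per-color structures that handle the \emph{color} component, combining them through a van-Emde-Boas-style recursion on the alphabet so that the query cost telescopes to $O(\log\log m (\log\log\sigma)^2)$.

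First I would impose an \emph{order-maintenance} structure on $L$ so that each entry $e$ carries an integer label $\ell(e)$ from a polynomially bounded universe, with labels consistent with the list order and maintained under insertions and deletions in $O(\log m)$ time; this is the classical list-labeling/order-maintenance result and it is what lets me treat ``precedes in $L$'' as ``has a smaller integer label''. With labels in hand, a colored predecessor query $(e_q,a)$ becomes: among all entries colored $a$, find the one with the largest label $< \ell(e_q)$. The naive approach stores, for each color $a$, a predecessor structure $T_a$ over the labels of the $a$-colored entries; a van Emde Boas / y-fast-trie-style structure answers each such query in $O(\log\log m)$ time and updates in $O(\log\log m)$ expected or, using a deterministic dynamic predecessor structure on a polynomial universe, $O(\log\log m)$ worst case with $O(m\log m)$ bits total over all colors. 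That already gives $O(\log\log m)$ query time but only handles a single fixed color; the difficulty the lemma is really addressing is that we may not want to pay $\Theta(\sigma)$ space for separate structures or that $\sigma$ may be huge, so the $(\log\log\sigma)^2$ factor appears precisely from recursing on the alphabet.

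Next I would set up the alphabet recursion. Partition $\Sigma=\{1,\dots,\sigma\}$ into $\sqrt\sigma$ blocks of $\sqrt\sigma$ consecutive symbols, recursively. At each node of this recursion tree (depth $O(\log\log\sigma)$) and for each ``active'' range of labels, I keep a secondary structure recording which child blocks currently contain an entry, so that a query for color $a$ can, in $O(\log\log\sigma)$ levels, be localized; but since a colored predecessor query fixes $a$ exactly, the recursion on $\Sigma$ is not used for searching the color but for \emph{storing} the per-color predecessor structures compactly and for supporting the bookkeeping that a new color may appear or vanish. Concretely, I would store at each recursion node a predecessor structure over labels (à la Lemma~\ref{lemma:colpred0}'s intended use), and the query walks down to the leaf for $a$ issuing one label-predecessor query of cost $O(\log\log m)$ at each of the $O(\log\log\sigma)$ levels, and within each such level a further $O(\log\log\sigma)$ factor arises from the van-Emde-Boas search \emph{inside} that level's universe — giving $O(\log\log m\,(\log\log\sigma)^2)$ overall. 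The space is $O(m\log^2 m)$ bits because each of the $m$ entries is replicated across $O(\log\log\sigma)=O(\log m)$ recursion levels (after bounding $\sigma\le n$, $\log\log\sigma = O(\log m)$ is crude but sufficient) and each copy participates in a predecessor structure costing $O(\log m)$ bits.

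I expect the main obstacle to be making the two recursions — the order-maintenance labels, which change under list updates, and the van-Emde-Boas structures over those labels, which must be rebuilt or patched when labels change — interact with only $O(\log m)$ update time. A single list insertion can relabel $\Theta(\log m)$ (or more, in a naive scheme) neighbouring entries, and each relabel must propagate into every predecessor structure containing that entry across all $O(\log\log\sigma)$ recursion levels; keeping this within $O(\log m)$ requires a labeling scheme in which, amortized or worst-case, each update touches only $O(1)$ labels in the relevant granularity, or a two-level bucketing where the inner structures are rebuilt lazily. The cleanest route is probably to use an order-maintenance structure with $O(1)$ amortized relabelings per update combined with a weight-balanced bucketing of $L$ into groups of size $\Theta(\log m)$, so that the global predecessor structures are over \emph{group} representatives (only $O(m/\log m)$ of them, updated rarely) while within-group queries are handled by a small universal lookup table; this is the step where the $o(n^\varepsilon)$-size table alluded to in Lemma~\ref{lemma:logsigma0} does its work, and I would expect the detailed proof to spend most of its effort there. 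I would then conclude by verifying that every query decomposes into one within-group scan of cost $O(\log\log\sigma)$ via the table plus one global label-predecessor query of cost $O(\log\log m(\log\log\sigma)^2)$, and that the space and $O(\log m)$ update bounds hold, after which the subsequent lemmas reduce both the space to linear and the update time as promised.
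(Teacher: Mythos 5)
There is a genuine gap here, and it sits exactly where you flag it yourself: the interaction between the order-maintenance labels and the label-based predecessor structures. Your whole reduction rests on turning ``precedes in $L$'' into ``has smaller integer label'' and then running van-Emde-Boas/y-fast structures over those labels, but any order-maintenance scheme must relabel entries as the list evolves, and every relabeling has to be replayed in every predecessor structure containing that entry. You acknowledge this and gesture at a fix (bucketing into $\Theta(\log m)$-size groups, global structures over representatives, a universal lookup table inside a group), but you never carry it out, and the table idea does not work here: a group of $\Theta(\log m)$ entries whose colors come from an alphabet of size $\sigma$ (potentially $\log\sigma=\Theta(\log n)$ bits each) cannot be indexed into a table of size $o(n^{\eps})$ — that device is appropriate for the $\Pref(u)$ structures of Lemma~\ref{lemma:logsigma0}, not for colored predecessor over a large alphabet. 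A second problem is the provenance of the $(\log\log\sigma)^2$ factor: you first (correctly) observe that the color $a$ is given exactly, so no search over $\Sigma$ is needed, and then nonetheless charge $O(\log\log\sigma)$ levels times an extra $O(\log\log\sigma)$ per level; as written this is not a mechanism but a rearrangement of the target bound. Finally, the deterministic $O(\log\log m)$ worst-case dynamic predecessor structure you invoke is asserted, not supplied — y-fast tries are randomized, and the point of this section of the paper is a deterministic solution.

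For contrast, the paper's proof avoids labels altogether, which is precisely how it escapes the relabeling trap. It keeps a balanced (weight-balanced B-) tree $T_L$ over the entries of $L$; each node $u$ stores the set $Col(u)$ of colors appearing in its subtree, together with pointers $a.\min(u)$ and $a.\max(u)$ to the extreme $a$-colored descendants, and each leaf stores pointers to all its ancestors. A query $(e_q,a)$ binary-searches over the $O(\log m)$ ancestors of $e_q$ for the lowest ancestor whose color set contains $a$ — $O(\log\log m)$ probes, each a deterministic dictionary lookup on an alphabet-sized universe costing $O((\log\log\sigma)^2)$ — and then reads off $a.\max$ of a left child or steps back one entry in the linked sublist $L_a$; the order-maintenance structure is used only for $O(1)$-time order comparisons. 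Updates just refresh $Col(\cdot)$ along one root-to-leaf path and rebalance, giving $O(\log m)$ worst case, and the $O(\log m)$-bit-per-node-per-color bookkeeping over a tree of height $O(\log m)$ is what yields the $O(m\log^2 m)$-bit space. If you want to salvage your route, you would need either an order-maintenance scheme whose labels feed the predecessor structures without ever forcing more than $O(\log m)$ secondary updates per list update (worst case), or to abandon labels in favor of a monotone-search structure over the list itself, which is what the paper does.
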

\begin{proof}
For a symbol $a$, let $L_a$ denote the sublist of $L$ that consists of entries containing $a$. Each entry that contains $a$ 
is augmented with a pointer to the next and the previous entries in $L_a$.  We also store an order maintenance data structure on $L$. This data structure can determine in $O(1)$ time whether $e_1$ precedes $e_2$ in $L$ for two arbitrary  entries $e_1\in L$ and $e_2\in L$ in a dynamic list $L$. We refer to~\cite{BenderCDFZ02,Kopelowitz12} for a description of such a data structure. 

We keep a balanced tree $T_L$ on $L$. For a node $u\in T_L$, the set $Col(u)$ consists of all symbols $a$ such that at least one leaf descendant of $u$ contains $a$. In every leaf of $T_L$, we keep pointers to all its ancestors.  
For every $a\in C(u)$, we also keep 
$a.\min(u)$  and $a.\max(u)$; $a.\min(u)$ (resp.\ $a.\max(u)$) points to the leftmost (rightmost) element of $L$ in the subtree of $u$ colored with $a$. 

Suppose that we want to find the rightmost entry $e_a\in L$ that contains a symbol $a$ and precedes an entry $e_q\in L$. 
We look for the lowest ancestor $u$ of (the leaf that contains) $e_q$ such that $a\in Dict(u)$. Using binary search on $\log n$ ancestors of $e_q$, we can find $u$ in $O(\log \log n (\log\log \sigma)^2)$ time.  If $e_q$ is in the right subtree of $u$, then 
$e_a=a.\max(u_l)$ where $u_l$ is the left child of $u$. If $e_q$ is in the left subtree of $u$, then we find $e'_a=a.\min(u_r)$ where $u_r$ is the right child of $u$. The entry $e'_a$ is the leftmost entry that follows $e_q$. Hence the entry $e_a$ is the first occurrence of $a$ in $L$ before $e'_a$. In other words, $e_a$ precedes $e'_a$ in $L_a$. 

When a new element $e$ is inserted into $L$, we insert it into some leaf $l_e$ of $T_L$ and a new entry into the corresponding list $L_a$. Insertion into $L_a$ requires that we find the rightmost entry $e_a$ that is colored with $a$ and precedes $e$ in the list. This takes $O(\log m/\log\log n)$ time as described above. Then we visit all ancestors of $l_e$ in $T_L$. If necessary, we add $a$ to  $C(v)$ in each visited node $v$.
 We keep the tree balanced, using the algorithms of weight-balanced B-tree~\cite{ArgeV03}.
The cost of maintaining $T_L$ so that its height remains $O(\log m)$ is $O(\log m)$ per insertion.  Deletions are symmetric. When an element $e$ is deleted, we remove it from the list $L_a$ and update $Dict(u)$ in at most one ancestor of $e$. Then we remove the leaf that contains $e$. The weight-balanced B-tree is 
not modified after the deletion of a leaf. But when a fraction of leaves is deleted, we construct a new tree $T_L$ and discard the old instance of $T_L$. The process of re-building $T_L$ can be run in the background so that the total worst-case cost of deleting $e$ is  $O(\log m)$.
\end{proof}

\begin{lemma}
\label{lemma:colpred}
Let $L$ be a list with $m\le n$ entries. There exists an $O(m\log m)$-bit data structure that answers colored predecessor queries on $L$ in $O((\log \log \sigma)^2\log\log m )$ time and supports insertions and deletions in $O((\log \log\sigma)^2\log\log m)$ time.
\end{lemma}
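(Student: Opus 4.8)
The plan is to keep the $O(1)$-time order-maintenance structure on $L$ from the proof of Lemma~\ref{lemma:colpred0} (so that any two entries of $L$ can be compared in constant time, and so that every entry carries an integer \emph{label} from a universe of size $m^{O(1)}$, i.e.\ an $O(\log m)$-bit key~\cite{BenderCDFZ02}), but to \emph{discard the balanced tree $T_L$ entirely}. That tree is what makes Lemma~\ref{lemma:colpred0} expensive: it stores, for every node and every color appearing below it, an explicit $\Theta(\log m)$-bit pointer to the extreme such occurrence, costing $\Theta(\log m)$ bits per node--color pair, and it forces an insertion to touch all $\Theta(\log m)$ ancestors of the affected leaf in order to refresh those extreme pointers, which is exactly where the $O(\log m)$ update time comes from. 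Instead I would maintain \emph{one dynamic integer predecessor structure per color}: for each $a\in\Sigma$ currently occurring in $L$, a structure $Y_a$ (a $y$-fast trie / van~Emde~Boas--style structure over an $O(\log m)$-bit universe) holding the labels of the entries colored $a$, supporting predecessor queries and insertions/deletions in $O(\log\log m)$ time and occupying $O(|L_a|\log m)$ bits. Summed over all colors this is $O\bigl(\sum_a|L_a|\log m\bigr)=O(m\log m)$ bits; when $\sigma$ exceeds $\mathrm{poly}(m)$ we first pass to the effective alphabet of $L$ so that color names also cost only $O(\log m)$ bits.

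To recover $Y_a$ from the symbol $a$ I would keep a dynamic dictionary over the at most $\min(\sigma,m)$ active colors, with $O(\log\sigma)$-bit keys; implemented as a $y$-fast trie on colors whose bottom buckets again use a deterministic dynamic dictionary, a lookup or an update costs $O((\log\log\sigma)^2)$, and this is the only place the alphabet enters. A colored predecessor query $(e_q,a)$ is then: one dictionary lookup to obtain $Y_a$ --- or, in the degenerate case $|L_a|\le 1$, the unique occurrence of $a$, stored directly with the dictionary entry --- in $O((\log\log\sigma)^2)$ time; one predecessor query for $\mathrm{label}(e_q)$ inside $Y_a$, in $O(\log\log m)$ time, returning a label that we map back to its list entry; and one $O(1)$-time order comparison to confirm the returned entry really precedes $e_q$, returning ``none'' otherwise. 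The total is $O((\log\log\sigma)^2+\log\log m)$, which is within the claimed $O((\log\log\sigma)^2\log\log m)$.

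An update proceeds symmetrically: locate $a$ in the color dictionary (creating $Y_a$ and inserting $a$ when $a$ becomes active, or deleting them when it becomes inactive, each in $O((\log\log\sigma)^2)$ time) and then insert or delete the new label in $Y_a$ in $O(\log\log m)$ time. The one genuine subtlety --- and the step I expect to be the main obstacle --- is that the order-maintenance structure occasionally \emph{relabels} entries, and each relabeled entry must be deleted and re-inserted into its own $Y_a$ at cost $O(\log\log m)$ per entry. With the standard two-level order-maintenance scheme~\cite{BenderCDFZ02} only $O(1)$ labels change per update \emph{amortized}, and a single update ever relabels at most the $O(\log m)$ entries of one bottom group, or at most the $O(m/\log m)$ group leaders; spreading each such local or global rebuild over the next $\Theta(\log m)$ updates --- the same background re-building device used everywhere else in the paper, with the group leaders kept weight-balanced as in~\cite{ArgeV03} --- turns this into an $O(\log\log m)$ worst-case contribution per update, so both operations run in $O((\log\log\sigma)^2+\log\log m)$ worst-case time.

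Hence the structure uses $O(m\log m)$ bits and answers queries and updates in $O((\log\log\sigma)^2\log\log m)$ worst-case time. The delicate point is purely the interaction with the background-rebuilt order-maintenance structure --- one must keep every $Y_a$ consistent with a bottom group, or the leader list, while it is being rebuilt (e.g.\ by rebuilding into a fresh copy and redirecting the affected $Y_a$ entries incrementally as part of the same schedule) and never let a group drift outside its size range --- whereas the per-color predecessor structures, the effective-alphabet dictionary, and the space accounting are routine.
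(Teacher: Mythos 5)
There is a genuine gap, and it is exactly at the point you flag as the ``one genuine subtlety.'' Your update bound rests on the claim that in the two-level order-maintenance scheme ``only $O(1)$ labels change per update amortized.'' That is true of the \emph{work done inside} the order-maintenance structure, but it is not true of the integer keys you export into the structures $Y_a$. The $O(1)$ amortized bound of Bender et al.\ is obtained precisely by indirection: when a group leader is relabeled, only the leader's stored label changes, while the members' order information stays implicit in group membership. The moment each entry must carry an explicit label from a universe of size $m^{O(1)}$ that is usable as a key in a van Emde Boas/y-fast structure, every leader relabel changes the (composed) keys of all $\Theta(\log m)$ members of that group, and more generally the list-labeling lower bounds for polynomial-size universes force $\Omega(\log m)$ key changes per insertion in an amortized sense. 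Each such change is a delete-plus-insert in some $Y_a$ at cost $\Theta(\log\log m)$, so the maintenance cost of your scheme is $\Theta(\log m\log\log m)$ per update, not $O((\log\log\sigma)^2+\log\log m)$; this cannot be repaired by bookkeeping alone, since it is the reason the colored-predecessor-in-a-list problem is nontrivial (and why Kopelowitz's randomized and Mortensen's restricted solutions exist as separate results). Your deamortization sketch has the same flaw in sharper form: while a bottom group or the leader list is being relabeled ``incrementally over the next $\Theta(\log m)$ updates,'' the $Y_a$'s hold a mixture of old and new labels whose numeric order need not agree with the list order, so predecessor queries made during the transition can simply return wrong entries; keeping them consistent is not a routine background rebuild. (A secondary issue: y-fast tries are hashing-based, whereas the lemma is part of a deterministic construction; a deterministic substitute over an $O(\log m)$-bit universe with $O(\log\log m)$ worst-case time and linear space is not off-the-shelf.)

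For contrast, the paper's proof avoids label-keyed predecessor structures altogether. Each color list $L_a$ is cut into blocks of $O(\log^2 m)$ consecutive entries, and only the block heads form a sparse list $L_1$, which is kept in the more expensive structure of Lemma~\ref{lemma:colpred0}; since $L_1$ receives one insertion per $\Theta(\log m)$ insertions into $L$, its $O(\log m)$ update time is absorbed. A query first finds the last entry of $L_1$ preceding $e_q$ using Mehlhorn's union--split--find structure on $L$ in $O(\log\log m)$ time, then answers a colored predecessor query on $L_1$ in $O(\log\log m(\log\log\sigma)^2)$ time, and finally finger-searches in $T_a$ over the $O(\log^2 m)$ candidates in $O(\log\log m)$ time. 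If you want to salvage your plan, you would need an analogous level of indirection (e.g.\ keying per-color structures by blocks or groups rather than by per-element labels), at which point you are essentially reconstructing the paper's argument.
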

\begin{proof}
We divide every $L_a$ into $O(|L_a|/\log m)$ blocks so that every block contains $O(\log^2 m)$ consecutive entries of  $L_a$.  If $L_a$ consists of more than one block, then we maintain the list $L'_a$ that contains the first entry from every block of $L_a$.   The list $L_1$ contains all elements of $L'_a$ for all symbols $a$. We keep $L_1$ in the data structure of Lemma~\ref{lemma:colpred0}. 
For any symbol $a$, all elements of $L_a$ are also stored in a data structure $T_a$ that supports finger searches~\cite{GuibasMPR77}:  For any element $e_q\in L$ and a \emph{finger} $e'_a\in L_a$, $T_a$ can return the rightmost entry 
$e_a$ that is colored with $a$ and precedes $e_q$ in $O(\log d)$ 
comparisons, where $d$ is the number of entries between $e'_a$ and $e_a$ in $L_a$. Finally we also keep the list $L$ in the union-split-find data structure~ of Mehlhorn~\cite{MehlhornN90}. Using this data structure, we can find the first $e'\in L_1$ that precedes any $e\in L$ in $O(\log\log m)$ time. The data structure of Mehlhorn et al.~\cite{MehlhornN90} uses $O(m)$ words and supports updates in $O(\log\log m)$ time.

In order to find $e_a$ colored with symbol $a$ that precedes $e_q$, we find the first entry $e'\in L_1$ that precedes $e_q$. Then we identify the first entry $e_a'$ colored with $a$ that precedes $e'$.  There are $O(\log^2 m)$ entries of $L_a$ between $e'_a$ and $e_a$. When $e'_a$ is known, we can find $e_a$ in $O(\log \log m)$ time using finger search on $T_a$. The total query time is dominated by the search in $L_1$ and equals $O(\log\log m(\log\log \sigma)^2)$. 

When a new entry $e$ of color $a$ is inserted, we update $L$. 
Then we find the position of $e$ in $L_a$ and update $L_a$ and $T_a$. We can maintain the sizes of blocks in lists $L_a$ so that each block consists of $O(\log^2 m)$ entries and  there is one insertion into $L_1$ for $O(\log m)$ insertions into $L$; details will be given in the full version.  Thus the total cost of an insertion is $O((\log \log \sigma)^2\log\log m )$. Deletions are symmetric. 
\end{proof}

\section{Prefix Sum  Queries on a List}
\label{sec:dl}
In this section we describe a data structure on a list $L$ that is used in the proof of Lemma~\ref{lemma:logsigma0} in Section~\ref{sec:ranklog}.
\begin{lemma}
  \label{lemma:dl}
We can keep  a dynamic list $L$ in an $O(m \log m)$-bit data structure $D(L)$, where $m$ is the number of entries in $L$. $D(L)$ can find the $i$-th entry in $L$ for $1\le i \le m$ in 
$O(\log m/\log \log n)$ time. $D(L)$ can also compute the number of entries before a given element $e\in L$ in $O(\log m/\log\log n)$ time. Insertions and deletions are also supported in $O(\log m/\log \log n)$ time.  
\end{lemma}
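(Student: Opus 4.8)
The plan is to implement $D(L)$ as a weight-balanced B-tree $T$ of branching factor $\Theta(\log^{\eps} n)$ built over the entries of $L$, so that the height is $O(\log m/\log\log n)$, together with a small amount of order information stored at each node. Each leaf of $T$ stores one entry of $L$ (equivalently, a block of $\Theta(1)$ consecutive entries), and each internal node $u$ stores, in an array indexed by child, the number of leaf descendants under each of its children. Since each internal node has at most $\Theta(\log^{\eps} n)$ children, each such counter array fits in $O(\log^{\eps} n \cdot \log m) = o(\log m)$ bits at the node, but more carefully we pack the per-child subtree sizes into $O(\log^{\eps}n)$ machine words, for a total of $O(m\log m/\log^{1-\eps}n) = O(m\log m)$ bits overall; alternatively, observe each counter is $O(\log m)$ bits and there are $O(m/\log^{\eps}n \cdot \log^{\eps}n) = O(m)$ of them in total, giving $O(m\log m)$ bits. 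Each entry $e\in L$ also stores a pointer to its leaf in $T$.

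The two query operations are then standard root-to-leaf (or leaf-to-root) traversals. To find the $i$-th entry in $L$, I would descend from the root: at each node, scan the child-size array left to right, subtracting sizes until the running sum first reaches or exceeds $i$, then recurse into that child with the residual rank; the scan of one node's $O(\log^{\eps}n)$-entry array costs $O(\log^{\eps} n)$ time in the naive implementation, but packing the array into $O(\log^{\eps}n/\log n \cdot \log n)$ words lets us do the prefix-sum search within a node in $O(1)$ time using a precomputed universal table of size $o(n^{\eps})$ on $O(\log n)$-bit words, exactly the look-up-table technique already used in Lemma~\ref{lemma:logsigma0} and its surrounding constructions. Hence each of the $O(\log m/\log\log n)$ levels costs $O(1)$, giving total query time $O(\log m/\log\log n)$. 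To count the entries before a given $e\in L$, follow the stored pointer from $e$ to its leaf, then walk up to the root; at each node, sum (again via the packed table) the sizes of the siblings to the left of the child we came from, accumulating; the total is the number of entries strictly before $e$.

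For updates, inserting a new entry after a given $e\in L$ means splitting $e$'s leaf (or inserting into its block), then walking to the root incrementing one counter per level, and finally rebalancing. I would use the weight-balanced B-tree machinery of Arge and Vitter~\cite{ArgeV03}: a node split affects only $O(1)$ nodes per level and the amortized-to-worst-case conversion is handled by their standard rebuilding argument, so the counter arrays of the $O(\log m/\log\log n)$ affected ancestors can be recomputed within the same bound (each affected array is rebuilt in $O(1)$ time via the table, and on a split the two new arrays are recomputed from the old one in $O(\log^{\eps}n)=o(\log m)$ time, which is absorbed). Deletions are symmetric, with the usual periodic global rebuild of $T$ (run in the background) to keep $m$ within a constant factor of the true size and keep space linear. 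The main obstacle is purely bookkeeping: verifying that the packed counter arrays can be updated in $O(1)$ worst-case time per node under splits and merges — i.e. that a node reorganization touching $\Theta(\log^{\eps}n)$ children still costs $O(1)$ amortized-then-deamortized per update after the weight-balanced B-tree rebalancing is accounted for — and confirming the universal table of size $o(n^{\eps})$ suffices for all the word-level prefix-sum and array-rewrite primitives we invoke.
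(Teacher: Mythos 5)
Your overall architecture coincides with the paper's proof of Lemma~\ref{lemma:dl}: a balanced tree of degree $\Theta(\log^{\eps}n)$ over the entries of $L$, per-node child subtree-size counters, a root-to-leaf descent to find the $i$-th entry, a leaf-to-root accumulation to count predecessors, and rebalancing via the weight-balanced B-trees of~\cite{ArgeV03}. The one place where your argument does not go through as written is the per-node primitive. You claim that packing a node's $\Theta(\log^{\eps}n)$ child counters into words and consulting a universal table of size $o(n^{\eps})$ yields $O(1)$-time prefix-sum search and $O(1)$-time update. But each counter can be as large as $m$, so a node's array occupies $\Theta(\log^{\eps}n\cdot\log m)$ bits, i.e.\ $\omega(\log n)$ bits; a table of size $o(n^{\eps})$ can only be indexed by arguments of $O(\eps\log n)$ bits, so the packed array cannot be fed to such a table, and a naive scan costs $\Theta(\log^{\eps}n)$ per node, which multiplied by the height $\Theta(\log m/\log\log n)$ exceeds the claimed query and update bounds. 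This is exactly the point you defer to ``bookkeeping,'' but it is the crux of the lemma and needs a genuine idea: the counters must be maintained in a form where searches and $\pm 1$ updates manipulate only $O(\log n)$ bits of truly variable information (for instance, exact high-order parts that are rebuilt periodically plus small corrections packed into a single word). The paper resolves this by invoking the searchable partial-sums structure of Lemma 2.2 in~\cite{PatrascuD04} as the data structure $\Pref(u)$ stored in each node; it gives $O(1)$ prefix-sum and search queries, $O(1)$ updates, linear space, and a look-up table of size $o(n^{\eps})$. If you replace your hand-rolled packed arrays with that structure (or explicitly reconstruct its correction-word technique), the remainder of your proof --- the two traversals, the $O(\log m/\log\log n)$ counter updates per insertion or deletion, and the weight-balanced rebalancing --- matches the paper's argument and is correct.
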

\begin{proof}
$D(L)$ is implemented as a balanced tree with node degree $\Theta(\log^{\eps}n)$. In every internal node we keep a data structure $\Pref(u)$; $\Pref(u)$ contains the total number $n(u_i)$ of elements stored below every child $u_i$ of $u$.
$\Pref(u)$ supports prefix sum queries (i.e., computes $\sum_{i=1}^t n(u_i)$ for any $t$) and finds the largest 
$j$, such that $\sum_{i=1}^j n(u_i) \le q$ for any integer $q$. 
We implement $\Pref(u)$  as in Lemma 2.2 in~\cite{PatrascuD04}
so that both types of queries are supported in $O(1)$ time.   $\Pref(u)$ uses linear space (in the number of its elements) and can be updated in $O(1)$ time. $\Pref(u)$ needs a look-up table of size $o(n^{\eps})$. To find the $i$-th entry  in a list, we traverse the root-to-leaf path; in each visited node $u$ we find the child that contains the $i$-th entry using $\Pref(u)$. 
To find the number of entries preceding a given  entry $e$ in a list, we traverse  the leaf-to-root path $\pi$ that starts in the leaf containing $e$. In each visited node $u$ we answer a query to $\Pref(u)$: if the $j$-th child $u_j$ of $u$ is on $\pi$, then we  compute $s(u)=\sum_{i=1}^{j-1}n(u_i)$ using $\Pref(u)$. The total number of entries to the left of $e$ is the sum of $s(u)$ for all  nodes $u$ on $\pi$. 
Since we spend $O(1)$ time in each visited node, both types of queries are answered in $O(1)$ time. 
An update operation leads to $O(\log m/\log \log n)$ updates of
data structures $\Pref(u)$. The tree can be re-balanced using the weight-balanced B-tree~\cite{ArgeV03}, so that its height is always bounded by $O(\log m/\log\log n)$.
\end{proof}

\section{Updating Data Structure in Lemma~\ref{lemma:logsigma}}
\label{sec:logsigmaupd}
When the size of a chunk $C_i$ equals $2\sigma$ we start the procedure of re-building this chunk. During the next $\sigma/2$ updates of $C_i$ we retrieve all elements of $C_i$ and insert them  into data structures for new chunks, $C'_i$ and $C''_i$. If an update is a deletion of some element $e$ and $e$ was already copied into $C_i'$ or $C_i''$, then we remove the copy of $e$
from $C_i'$ or $C_i''$. When all elements of $C_i$ are copied into $C_i'$ and $C_i''$, we say that a chunk $C_i$ is a copied
chunk. We keep ids of all copied chunks in a data structure $L_d$. Whenever a copied chunk $C_i$ is updated we also execute  the same update of $C_i'$ or $C_i''$.

We also run the following iterative procedure that replaces copied chunks with two chunks. Each iteration starts by finding a chunk $C_i$ with the largest number of elements. Then all arrays $B_a$ are updated in increasing order of $a$. We insert a $0$-bit at an appropriate position of $B_a$ so that $B_a=1^{d_1}0\ldots1^{d_i}0\ldots$ is changed to $B_a=1^{d_1}0\ldots1^{d'_i}01^{d''_i}0\ldots$ where $d_i$, $d_i'$ and $d_i''$ denote the number of $a$'s that occur in $C_i$, $C_i'$ and $C''_i$ respectively. We keep a variable $lastsym$ that equals the largest symbol $a$, such that $B_a$ is already updated. When all $B_a$ are modified in the above manner, we also update $B_t$ and change it from $B_t=1^{n_1}0\ldots1^{n_i}0\ldots$ to $B_t=1^{n_1}0\ldots1^{n'_i}01^{n''_i}0\ldots$ where $n_i$, $n_i'$ and $n_i''$ denote the total number of symbols in $C_i$, $C_i'$ and $C''_i$ respectively. Finally we delete the id of $C_i$ from 
$L_d$ set $lastsym=0$ and start the next iteration.  Every iteration takes $O(\sigma)$ time.  When a chunk is added to $L_d$, its size does not exceed $5\sigma/2$. Using Theorem 5 in~\cite{DietzS87}, we can show that the size of each chunk in $L_d$ grows by at most by $\sigma\cdot O(h_n)$ where $h_n=O(\log n)$ denotes the $n$-th harmonic number. 

We slightly modify the method for answering a $\sel$ query. 
Let $k$ denote the index of the last chunk that was retrieved from $L_d$. That is, the above described iterative procedure is currently changing bit vectors $B_a$ and $B_t$ changing 
$B_a=\ldots1^{d_k}0\ldots$ to $\ldots1^{d'_k}01^{d''_k}0\ldots$ and 
$B_t=\ldots1^{n_k}0\ldots$ to $\ldots1^{n'_k}01^{n''_k}0\ldots$.
To answer a query $\sel_a(i,S)$, we first find the index $i_2$
of the chunk $C_{i_2}$ that contains the $i$-th occurrence of $i$,
$i_2=\ra_0(\sel_1(i,B_a),B_a)+1$. 
If $i_2<k$ or $a>lastsym$, we proceed as described in the proof of Lemma~\ref{lemma:logsigma}.  
If $i\ge k$ and $a\le lastsym$, we decrement $i_2$ by $1$, $i_2=i_2-1$ and also proceed as in Lemma~\ref{lemma:logsigma}.

We also keep track of the number of chunks that contain no more than  $\sigma$ elements. If there are at least $n/2\sigma$ chunks containing at most $\sigma$ symbols, then we start a global re-building procedure. We retrieve all elements of $S$ and insert them into a new data structure. In the new data structure all elements are distributed among chunks, so that each chunk contains $\sigma$ elements. The global re-building process is executed during $n/4\sigma$ updates.

\section{Re-Building  Compressed Data Structure in the Background}
\label{sec:updatesbackground}
As shown in Section~\ref{sec:compr}, we must bound the total number of symbols in $S_0$ by $O(n/r)$ for a parameter $r$. 
We must also bound the number of symbols in $S_i$ for $i\ge 1$
that are marked as deleted by $O(n/r)$. 
We run two alternating processes in the background to satisfy these requirements.
 In order to bound the workspace we process sub-sequences $S_i$ one-by-one. For every $i$, $1\le i\le r$, we produce a new version $S'$ of $S_i$ containing  all relevant  elements of $S_0$ (i.e., all elements of $S_0$ that precede the first element of 
$S_{i+1}$ and follow the last element of $S_{i-1}$ in $S$). 
In order to navigate in the new version of $S_i$, we must modify parts of auxiliary sequences (such as $R$, $\tS$, $\tE$, and $\tB$). Therefore our background process also produces new versions for the relevant portions of auxiliary sequences. When the new version of $S_i$ is created, we discard the old version; we also replace the parts of auxiliary sequences with their new versions.  The second background process removes elements marked as deleted and updates $S_i$ in the same manner. A more detailed description follows. 

We conceptually divide $S_0$ into $r$ substrings $S_{0,i}$ for $1\le i \le r$. An element $e\in S_0$ is in $S_{0,i}$ for $1< i < r$ iff $e$ precedes the first element of $S_{i+1}$ in $S$  and follows the last element of $S_{i-1}$ in $S$. An element $e\in S_0$ is in $S_{0,1}$ if $e$ precedes the first element of $S_2$; $e\in S_0$ is in $S_{0,r}$ if $e$ follows the last element of $S_{r-1}$. 
Likewise the sequence $\tS$ is conceptually divided into $r$ substrings $\tS_1,\ldots,\tS_r$.
An element $e\in \tS$ is in $\tS_i$ for some $i\ge 1$ if $e$ is a copy of some $e'\in S_i$ or $e$ is a copy of some $e'\in S_0$ and $e'\in S_{0,i}$. 
We conceptually divide the binary sequence $R$  using the same principle:  
$R[j]$ is in $R_i$ if the $j$-th element of $S$ is from $S_i$ or the $j$-th element of $S$ is some $e'\in S_0$ such that  $e'\in S_{0,i}$. Other binary sequences are divided in the same way.  The procedure for moving elements of $S_0$ into $S_i$ for some $i$, $1\le i\le r$, is as follows.
\begin{description}
\item[Step 1]
We start by creating a new instance $S^c$ of $S_i$ and a new instance $\tS^c$ of $\tS^i$; we also create new instances of 
$R^i$ and the $i$-th parts of other binary sequences; namely $R^c$, $D^c_a$ for all $a\in \Sigma$ such that $a$ occurs in $S_i$, $\tB^c$ and $\tE^c$ are copies of $R_i$, $D_{a,i}$, $\tB_i$ and $\tE_i$ respectively. The cost of creating new instances for parts of auxiliary sequences can be distributed 
among the following updates of $S$, as will be explained below. At the end of Step 1, $R^c$ is a copy of $R_i$; likewise  $D^c_a$, $\tB^c$ and $\tE^c$ are copies of  $D_{a,i}$, $\tB_i$ and $\tE_i$ respectively. These newly created sequences will be called copy sequences.\\
\item[Step 2]
Then we insert the elements of $S_{0,i}$ at appropriate positions of $S^c_i$.  
We  modify the sequence $\tS^c$  accordingly. 
Changes in $\tS^c$ and $S_i^c$ also lead to changes in copy sequences $R^c$,  $D^c_a$, $\tB^c$ and $\tE^c$. We distribute the cost of Step 2 among  updates of $S$.  We will say that all  elements that are kept in $S_0$ (resp. in $S^c_0$) upon completion of Step 1  are \emph{old} elements.
 When a sequence $S$ is updated, we spend $O(\log n/\log \log n)$ time on the following actions: (i) we find the next unprocessed element $e_n$ in $S_0^c$ (symbols in $S^c_0$ are processed in the left-to-right order);  we set the bit corresponding to $e_n$ in $R^c$ to $1$ (ii) we insert $e_n$ at appropriate position of    $S_i^c$ (iii) if necessary, we update $\tS^c$; copy sequences $\tB^c$ and $\tE^c$ are updated accordingly.
We may also need to  update copy sequences after an update of $S$. 
If the update of $S$ is an insertion, and a new element $e$ is inserted into $S_{0,i}$, then we also insert $e$ into $S_0^c$. If an element $e$ is deleted and $e\in S_{0,i}$, then we remove the copy of $e$ from $S^c_0$; changes in $S_0^c$ can also lead to changes in $\tS^c$.  If a symbol $a$ is deleted from $S_i$, then we update $D^c_a$  accordingly. 
\item[Step 3]
When $S_i^c$ is completed, we discard old $S_i$, set $S_i=S^c_i$, and start using the new $S_i$ from now on. Simultaneously we replace the relevant section of 
$\tS$ with $\tS^c$. We also replace the relevant parts of $R$,  $D_a$,  $\tB$ and $\tE$ with $R^c$, $D^c_a$,  $\tB^c$ and $\tE^c$.
\end{description}
In order to execute the above background process, we must implement binary sequences, so that two additional procedures are supported: A binary sequence of length $m$ is divided into $r$ sectors (substrings) of length $O(m/r)$ each. We can produce a copy of each sector. The cost of producing a copy is distributed among $\frac{m\log\log n}{2r \log n}$ updates; when the procedure is finished, the sector and its copy are equal. We can perform updates on the original sequence and on a sector copy. We can also replace a sector with its copy and discard the original sector. 
Same procedures are also supported for the non-binary sequence $\tS$.
We can implement these procedures in such way that the cost of $\ra$, $\sel$, $\acc$, and updates is not increased. Implementation of auxiliary procedures is explained in \shlongver{the full version of this paper, attached at the end of this submission}{Section~\ref{sec:seqaddition}}.

Step 1 of the above process takes $O(n/r)$ time. Step 2 (insertion of new elements into $S_i$) takes $O(v_i(\log n/\log\log n))$ time, where $v_i$ is the number of elements inserted into $S_i'$. Step 3 takes $O(\log n/\log\log n)$ time. Thus old elements of $S_0$ are moved to $S_i$ for $i\ge 1$ in 
$O(n)+\sum_i v_i(\log n/\log\log n) =O(n)$ time. This process can be distributed among $n/4r$ updates.

The process of purging the sequences $S_1$, $\ldots$, $S_r$ is based on the same approach. 
For each $i=1,\ldots ,r$, we create a new instance of $S_i$ without  deleted elements; then we discard the old instance 
and start using the new version of $S_i$. Relevant parts of $\tS$ and binary sequences are also updated.
The re-building of $S_i$ is implemented in the same way as in the procedure of moving elements from $S_0$ to $S_i$ for $i\ge 1$.  The cost of purging $S_i$ is distributed among $n/4r$ following updates. 
Two above described background processes are run alternatingly; the first process starts when the either the number of elements in  $S_0$ or the number of elements marked as deleted is equal to $n/4r$.  In this way we guarantee that the number of elements in $S_0$ and the number of deleted elements does not exceed $n/r$.

\section{Auxiliary Procedures for Binary and Non-Binary Sequences}
\label{sec:seqaddition}
In this section we show how a sequence $S$ can be stored  in such a way that additional processes that create a copy for a part 
of $S$ are supported. Furthermore we can update the copied part and later replace the original part with its modified copy. 
We start by describing a binary sequence that supports an additional operation $\init(S,m)$; $\init(S,m)$ initializes an empty sequence of length $m$ that consists of $m$ $0$-bits.  Recall that $\lambda=\log n/\log\log n$.
\begin{lemma}
\label{lemma:ints}
  A binary sequence $S$ that supports $\ra_1(i,S)$, $\sel_1(i,S)$, $\acc(i,S)$, insertions, deletions and $\init(S,m)$ for any $m\le n$  can be stored in $O(s\log\frac{n}{s})+o(n)$ bits, where $s$ is the number of $1$-bits and $n$ is the length of the sequence. All operations, except for $init(S,m)$ take $O(\lambda)$ time; $init(S,m)$ can be executed in $O(1)$ time.
\end{lemma}
\begin{proof}
We divide the sequence $S$  into blocks $B_i$ such that each $B_i$ consists of $\Theta(\log^2 n)$ bits.  Each block is further divided into sub-blocks of $\Theta(\log^{1/2}n)$ bits. We will say that a block or a sub-block is non-empty if it contains at least one $1$-bit. A doubly-linked list $L$ contains one entry for each non-empty block.  We also keep a list $L_i$ for every block $B_i$ that contains $1$-bits; $L_i$ contains one entry for each non-empty sub-block.  For each entry $e_i$ of $L$ we keep the number of $1$'s in the corresponding block $B_i$; we also keep the total  number of bits in blocks $B_{j+1}$, $B_{j+2}$, $\ldots$, $B_{i-1}$, where $B_j$ is the rightmost  non-empty block that precedes $B_i$. We maintain a data structure that enables us to find the block that contains the $i$-th bit in the sequence. We also maintain a data structure that can find the block containing the $i$-th $1$-bit (or $0$-bit) and the number of $1$-bits ($0$-bits) that precede a specified block. We maintain the same data structure for each sub-block. All these data structures are implemented as balanced trees with node degree $\log^{\eps}n$ for a small constant $\eps>0$. Each node is augmented with additional information about the number of $1$-bits (resp.\ the total number of bits) in the subtrees of its children. Implementation is the same as for data structures $D(L)$ and $D(L_a)$ in Lemma~\ref{lemma:logsigma0}.

 Positions of $1$-bits in the same sub-block are difference coded: for every $1$-bit we store the difference 
between its position and the position of the preceding $1$-bit 
in the same block; for the first $1$-bit in the block, we store 
its position in the block.  
The list $L$ and its data structures can be kept in $O(n/\log n)$ bits. All lists $L_i$ and their data structures are kept in $O(n(\log\log n/\sqrt{\log n}))$ bits. Difference coding of $1$-bits in all blocks consumes $O(s\log\frac{n}{s})$ bits.

To answer a query $\ra_1(i,S)$ we find the block $B_i$ and its sub-block $B_{i,j}$ containing the $i$-th bit. 
Then we find the number of $1$-bits that precede $B_k$ in $L$ and the number of $1$-bits that precede $B_{k,j}$ in $L_k$. 
We can find the number of $1$-bits that precede the bit with global position $i$ in $B_{k,j}$ using a look-up table.  
Summing three above values, we obtain $\ra_1(i,S)$.  Queries $\sel_1(i,S)$, $\ra_0(i,S)$, and $\ra_1(i,S)$ are computed in a similar way. Thus all queries are answered in $O(\log n/\log\log n)$ time.

Since we only keep non-empty blocks and sub-blocks, operation $\init(S,m)$ takes constant time.
Insertions and deletions are implemented as in previously known  data structures supporting rank and select on binary sequences. When an element is inserted, we find its block $B_i$  and its sub-block; we insert the new element into its sub-block and  update lists $L$ and $L_i$ if necessary. We maintain sizes of blocks and sub-blocks using standard techniques. Deletions are symmetric. 
Hence insertions and deletions are supported in $O(\log n/\log \log n)$ time. 
\end{proof}

Now we describe how a copy of a binary sequence $S$ can be created. Let $\lambda=\log n/\log\log n$.  
\begin{lemma}
\label{lemma:bincopy}
  Let $S$ be a binary sequence of length $s$. Procedure $copy()$, that produces a copy of $S$, can be implemented as a background process that runs during $O(s/\log n)$ consecutive updates.
We can support updates on the original sequence and its copy in $O(\lambda)$ time. Operations $\ra$, $\sel$, and $\acc$ are 
executed in $O(\lambda)$ time.  The underlying data structure uses $sH_0(S)+s+o(s)$ bits. 
\end{lemma}
\begin{proof}
The procedure for creating a copy $S'$ of $S$ consists of two stages. During the first stage we produce a copy of $S$. $S$ is represented in the same way as in~\cite{NS14}. As described in~\cite{NS14}, $S$ is split into chunks 
and we maintain data structures that  support counting the number of  $0$-bits (resp.\ $1$-bits) among the chunks and searching for the chunk that contains the $i$-th $0$-bit (or $1$-bit). We can create a copy by copying the original sequence of chunks. The data structure that supports counting and searching among chunks is essentially a tree with $O(s')$ nodes; we can create this tree in $O(s')$ time, where $s'=O(s/\log n)$ is the number of chunks. 

Thus the background process that creates a copy of $S$ takes $O(s/\log n)$ time. We can distribute its cost among $O(s/(\lambda\log n))$ updates where $\lambda=\log n/\log\log n$. We keep information about these updates in four data structures. The data structure $U$ keeps information about positions of updates: the $i$-th $1$-bit in a sequence
$U$ is the position of the $i$-th update (insertion or deletion) in $S$.
Thus $U$ contains one bit 
for every element in $S$ and one bit for every  element that was deleted from $S$. Updates are counted in the left-to-right order and $U$ is implemented as in Lemma~\ref{lemma:ints}. We also keep a bit sequence $T$ which indicates
the type of updates on $S$: $T[i]=0$ if the $i$-th update stored in $U$ is 
a deletion and $T[i]=1$ if the $i$-th update is an insertion. The sequence $B_n$ 
contains the values of elements inserted into $S$. A sequence $U_d$ helps us navigate between $S$ and $U$; $U_d$ contains one bit 
for every element in $S$ and one bit for every  element that was deleted from $S$. If $U_d[j]=1$, then the corresponding element was already deleted from $S$; if $U_d[j]=0$, then the corresponding element is in $S$.
During each update, we perform the following operations: 
\begin{itemize*}
\item 
a new element is inserted into or deleted from $S$ at position $i$\\
\item Let $i'=\sel_0(i,U_d)$ be the position of the $i$-th $0$-bit in $U_d$. If the update of $S$ is an insertion, we insert a $1$-bit into $U_d$ at position $i'$. If the update is a deletion, we replace the $i'$-th bit in $S$ with a $0$-bit (replacement is implemented by deleting a $1$-bit and inserting a $0$-bit at the same position). \\
\item If the update of $S$ is an insertion, we insert a $1$-bit at position $i'$ into $U$; if the update is a deletion we replace the $i'$-th bit of $U$ with a $1$-bit. We also insert a  bit indicating the type of update into the sequence $T$. If an update is an insertion, we add the value of a new bit  into a bit sequence $B_n$.  \\
\item  we spend $O(\lambda)$ time on constructing a copy sequence.
\end{itemize*}

The first stage is finished after $O(s/(\lambda\log n))$ updates of $S$. When the first stage is completed, $S$ and its copy sequence $S'$ differ because $O(s/(\lambda\log n))$ most recent updates changed the original sequence but were not performed on its copy. During the second stage we synchronize $S$ and $S'$. The synchronisation procedure is also distributed among $O(s/\lambda\log n)$ updates. During every update operation, we proceed as follows:\\
- a new element is inserted into $S$ or deleted from $S$. We also change the copy sequence $S'$ accordingly. If the position $i$ of an element in $S$ is known, then we can find its position $i_n$ in $S'$ using sequences $U_d$, $U$ and $T$.
Using $U_d$, we find the position $i_d=\sel_0(i,U_d)$ corresponding to $i$ in $U$.
Using $U$, we find the number $u$ of updates that precede $i_d$; using $T$ we can find the number of insertions and deletions among the first $u$ updates.\\ 
- we also execute updates stored in sequences $U$, $T$, and $B_n$. We retrieve 
the position $i=\sel_1(1,U)$ of the first $1$-bit stored in $U$ and find the position $i'$ in $S'$ that corresponds to the position $i$ in $S$. Then we either insert a new element at the position $i'$ or remove the $i'$-th element from $S'$ according to the data stored in $T$ and $B_n$. Finally we delete the $i$-th bit from $U$ and $U_d$. We also delete the corresponding bit from $T$ and remove the corresponding symbol from $B_n$ (if the processed update is an insertion). \\
At the end of the second stage $S'$ and $S$ are equivalent.
\end{proof}

Now we consider the sequence $S$ of length $s$ that is divided into $s/r$ contiguous parts for $r=\log^{O(1)}n$. Each part, called a sector of $S$, consists of $O(s/r)$ elements. The procedure $copysector()$ creates a copy of an arbitrary sector. The procedure $copysector()$ can be executed in the background during a sequence of $s/(r\lambda)$ updates. Furthermore we can split a sector into two sectors and merge two adjacent sectors in the same time. Last, we can also replace a sector with its copy in $O(r)$ time. Update operations are supported on both $S$ itself and on the copy of a sector (we assume that at any time a copy of only one sector is created or used). 
\begin{lemma}
  \label{lemma:bincopysec}
Let $S$ be a binary sequence of length $s\le n$  and let $S$ be divided into $r$ sectors of $O(s/r)$ symbols. Procedure $copysector()$, that produces a copy of a sector, can be implemented as a background process that runs during $O(s/( r\lambda \log n))$ consecutive updates. Procedures $splitsector()$ and $mergesectors()$ can be executed in the same way.
Operation $replacesector()$ can be executed in $O(\log\log n)$ time.  The underlying data structure uses $sH_0(S)+o(n)$ bits. 
\end{lemma}
\begin{proof}
  Every sector is maintained in the data structure of Lemma~\ref{lemma:bincopy}. Furthermore we maintain a sequence $G$ that keeps the numbers of elements in every sector. Sequences $G_0$ and $G_1$ maintain the number of $0$'s and $1$'s in every sector. Using $G$ and data structures for individual sectors, we can answer rank, select, and access queries on $S$. Procedure $copysector()$ is implemented as $copy()$ on a sector of $S$. When a copy of a sector is ready, we can support updates on this copy. Besides we can also replace a sector with its copy and update the data structure on the sequence $G$ accordingly; this operation takes $O(\log \log n)$ time. Splitting and merging of sectors is implemented in a similar way. Suppose that we want to split a sector $S_i$ into $S_i'$ and $S''_i$. We employ the same two-stage procedure that was used to create a copy of a sector. During the first stage we assign elements of $S_i$ to $S'_i$ and $S''_i$. Then we create the data structures for $S'_i$ and $S''_i$. Updates that are relevant for new versions are deposited in data structures $U_1$, $T_1$, $B_1$ and $U_2$, $T_2$, $B_2$ respectively. During the second stage we execute updates stored in $U_i$, $T_i$, and $B_i$ for $i=1,2$. Auxiliary data structures are realized in the same way as in the procedure $copysector()$. When new sectors $S_i'$ and $S_i''$ are ready, we replace $S_i$ with $S_i'$ and $S_i''$ and update $G$.   
\end{proof}

We can implement similar procedures for a sequence over a general alphabet. We assume, however, that copies of sectors are produced consecutively: first copy of the first sector is created; when the first sector is replaced with a (possibly modified) copy sector, we create the copy of the next sector, etc. In this scenario, it is easy to maintain the dynamic sequence that supports the copying, splitting and merging of sectors.
\begin{lemma}
\label{lemma:copygen}
  Let $S$ be a sequence of length $s\le n$ over an alphabet $\Sigma=\{\,1,\ldots\sigma\,\}$ and let $S$ be divided into $r$ sectors. Procedure $copynextsector()$, that produces a copy of a sector, can be implemented as a background process that runs during $O(s/(r\lambda))$ consecutive updates. Procedures $splitnextsector()$ and $mergesectors()$ can be executed in the same way.
Operation $replacenextsector$ can be executed in $O(\lambda)$ time. The data structure for $S$ uses $O(s\log\sigma)$ bits.  
\end{lemma}
\begin{proof}
   Elements of $S$ are distributed among two  dynamic data structure, $S_{old}$ and $S_{new}$. Both of them are implemented as in Lemma~\ref{lemma:logsigma}. Originally $S_{new}$ is empty and all elements of $S$ are in $S_{old}$. Procedure $copynextsector()$ traverses elements of the next sector and appends them at the end of the new sequence $S_{new}$. When $replacenextsector()$ is executed for the last (rightmost) sector, we set $S_{old}=S_{new}$ and $S_{new}=\emptyset$. Let $i_p$ denote the number of elements in all sectors of $S_{old}$ for which operation $replacenextsector()$ was executed. Let $i_m$ denote the total number of elements currently kept in $S_{new}$. We can answer $access(i,S)$ by retrieving $S_{new}[i]$ if $i\le i_m$ or retrieving $S[i-i_m+i_p]$ if $i>i_m$. We can answer a query $\ra_a(i,S)$ as follows. If $i\le i_m$, $\ra_a(i,S)=\ra_a(i,S_{new})$. If $i>i_m$, $\ra_a(i,S)=\ra_a(i-i_m+i_p,S_{old})$. 
To answer a query $\sel_a(i,S)$ we check whether $\ra_a(i_m,S_{new})\ge i$. If this condition is satisfied, then $\sel_a(i,S)=\sel_a(i,S_{new})$; otherwise $\sel_a(i,S)=\sel_a(i',S_{old})$ where $i'=i-\ra_a(i_m,S_{new})+\ra_a(i_p,S_{old})$.
\end{proof}

\section{Analysis}
\label{sec:anal}
We show that deleting $n/r$ symbols from a sequence $S$ does not increase too much the $k$-th order entropy. This result is needed in Section~\ref{sec:compr2} to prove the  space bound 
of $nH_k+o(n\log \sigma)+O(n (\log n)/r)$. 
Let $S=S[1]\ldots S[n]$. Let $S_0$ denote the subsequence of symbols that are deleted from $S$ and let $S_n=S\setminus S_0$. Let $|S_0|=n/r$ for a parameter $r$. We want to estimate  $|S_n|H_k(S_n) - |S|H_k(S)$ for some parameter $k\le \alpha\log_{\sigma}n-1$ and $\alpha<1$.

A context  $c_i$ is an arbitrary sequence of length $k$ over an alphabet $\sigma$; let $f_{a,i}$ denote the number of times 
a symbol $a$ is preceded by context $c_i$ in $S$ and $n_i=\sum_{a\in\Sigma} f_{a,i}$. The $k$-th order empirical entropy is defined as $\sum_{c_i\in \Sigma^k}\sum_{a\in\Sigma} f_{a,i}\log\frac{n_i}{f_{a,i}}$.

For a context $c_l$, let $n_l$ be the number of times it occurs in $S$ and let $n_l'$ be the number of times it occurs in $S\setminus S_0$.
Suppose that a symbol $S[i]$ is deleted. It changes the context for the next $k$ symbols $S[i+1],..., S[i+k]$. We will say that one deletion spoils $k$ symbols and moves them to a different context. 
If a symbol $S[l]$ is spoiled and the context of $S[l]$ in $S_n$ is $c_l$, then 
$S[l]$ is encoded with at most $\log (n'_l)$ bits. 
Let $p_l=n'_l-n_l$ be the number of new symbols in the context $c_l$. Let $f'_{a,l}$ be the frequency of a new symbol $a$ in $c_l$ (that is, the number of times a spoiled symbol 
$a$ appears in the context $c_l$ in $S_n$). Then the total 
encoding length of spoiled symbols in the context $c_l$ does not exceed  $\sum_a f'_{a,l}\log\frac{n'_l}{f'_{a,l}}$ where 
$\sum_a f'_{a,l}=p_l$.
By Jensen's inequality, $\sum_a f'_{a,l}\log\frac{n'_l}{f'_{a,l}}\le p_l \log \frac{n'_l}{p_l/\sigma}$. Summing over all contexts $c_l$, the total encoding length of spoiled symbols can be bounded by $\sum_l p_l (\log \frac{n'_l}{p_l}+\log\sigma)$.

The total number $m$ of symbols that are spoiled is between $n/r$ and $(n/r)k-1$ because each deletion spoils between $1$ and $k$ following symbols. The number of spoiled symbols does not exceed $n$ independently of $r$ and $k$.
Hence $\sum_l p_l\le (n/r)\log_{\sigma}n$. Besides 
$\sum_l p_l\le \sum_l n'_l \le n$. 
Therefore $\sum_l p_l\log\frac{n'_l}{p_l}=O(n)$. To prove the latter fact, we divide all contexts $c_l$ into classes $L_1$, $L_2$, $\ldots$, $L_{\log^*n}$. 
$L_i$ contains all context indices $l$, such that $f_{i-1}(n)>\frac{n'_l}{p_l}\ge f_i(n)$, where $f_0(n)=n$ and $f_i(n)=(\log^{(i)}n)^2$ for $i\ge 1$.
For any $L_i$, $\sum_{l\in L_i}p_l\log\frac{n'_l}{p_l}\le \frac{n}{(\log^{(i)}n)^2}\log (f_{i-1}(n))=\frac{n}{(\log^{(i)}n)^2}O(\log^{(i)}n)=
O(\frac{n}{\log^{(i)}n})$. Hence $\sum p_l\log\frac{n'_l}{p_l}=\sum_{i=1}^{\log^*n}\sum_{l\in L_i}p_l\log\frac{n'_l}{p_l}= n\sum_{i=1}^{\log^*n}O(\frac{1}{\log^{(i)}n})$. Hence $\sum\frac{n'_l}{p_l}=O(n)$ because 
$\sum_{i=1}^{\log^*n}\frac{1}{\log^{(i)}n}=O(1)$.
Thus the total encoding length of all spoiled symbols is bounded by $E_{add}=O(n(1+\frac{\log n}{r}))$. 

Another factor that may increase the encoding length  is that spoiled symbols are moved to new contexts and thus the encoding length of all other symbols in these new contexts slightly increases. Consider a context $c_i$  that occurred $n_i$ times in $S$ and $n_i+f_i$ times in $S\setminus S_0$ for some $f_i>0$. 
We say that a symbol $S[l]=a$ that follows $c_i$ in $S\setminus S_0$ is an old occurrence if this occurrence of $a$ is also preceded by $c_i$ in $S$. 
The  encoding length for all old occurrences in $S_n$ is $\sum_{a=1}^{\sigma} \occ_{a,i}\log\frac{n_i+f_i}{\occ_{a,i}}$. The total encoding length for the same occurrences in $S$ is  $\sum_{a=1}^{\sigma} \occ_{a,i}\log\frac{n_i}{\occ_{a,i}}$.
The difference between encoding lengths of old occurrences in $S$ and $S_n$ 
is $inc(c_i)=n_i\log \frac{n_i+f_i}{n_i}$. If $f_i\le n_i$, then $inc(c_i)\le n_i$. Summing up the differences over all  contexts $c_i$ such that 
$f_i\le n_i$, we obtain $E_1\le \sum_i n_i\le n$. 
If $f_i> n_i$, then $inc(c_i)\le n_i(\log\frac{f_i}{n_i} +1)$
Summing up over all contexts $c_i$ such that $f_i>n_i$, we get  $E_2=\sum_{i} n_i(\log \frac{f_i}{n_i}+1)$. Since $\sum_{i} n_i < \sum_{i} f_i \le n$, 
$\sum_i n_i < n$ and $\sum_i n_i\log\frac{f_i}{n_i} \le n$.
Hence, $E_2=O(n)$.  

Thus $|S_n|H_k(S_n) - |S|H_k(S)=E_{add}+E_1+E_2=O(n(1+\frac{\log n}{r}))$.  
We must also account for elements that are marked as deleted, but are still stored in sequences $S_i$ for $i\ge 1$. The number of elements that are marked as deleted is bounded by $O(n/r)$. These elements  need $O(n\frac{\log n}{r})$ bits. Every deleted element spoils up to $k$ symbols of $S$. Using the same analysis as above, the extra encoding length due to spoiled symbols can be estimated to be $O(n(1+\frac{\log n}{r}))$. Thus all static sequences $S_i$ for $i\ge 1$ are stored in $ nH_k+ O(n(1+\frac{\log n}{r}))$ bits.

\section{Static Data Structure}
\label{sec:construct}
\tolerance=1000
In this section we describe a  static data structure supporting 
access, rank and select queries. In comparison to previous static data structures, we obtain two additional results. Our data structure can be constructed quickly if the alphabet size $\sigma$ is small.  At the same time we show that our data structure supports extraction of a  substring of length $\ell$ in optimal $O(\log n/\log\log n +\ell/\log_{\sigma}n)$ time. 
As before, let $S$ denote a sequence of length $n$ over an  alphabet $\Sigma=\{\,1\ldots,\sigma\,\}$. 

Our static representation keeps the sequence $S$ in compressed form following the approach of\cite{FerraginaV07}. $S$ is represented as a sequence $S^M$of meta-symbols
over an alphabet $\Sigma^{\ell}$ for $\ell=\ceil{\frac{\log_{\sigma}n}{2}}$. That is, each meta-symbol encodes $\ell$ symbols of the original sequence. 
It is shown in~\cite{FerraginaV07} that $H_0(S^M)\le H_k(S) + (n/\ell)k\log\sigma$ simultaneously for all $k\le\ell$. We can keep $S^M$ in $(n/\ell)(H_0(S^M)+O(1))$ bits using e.g. Huffman coding. \no{For any $k=o(\log_\sigma n)$, this representation needs $nH_k(S) + o(nH_k(S)) +O(\frac{k+1}{\log_{\sigma}n}\log \sigma) $ bits.}
\paragraph{Data Structure for Rank and Select Queries.}
 We split $S$ into blocks of size $\sigma$. For every $a\in [1,\sigma]$, we keep a binary sequence $B_a=1^{s_1}01^{s_2}0\ldots 1^{s_l}0$
where $s_i$ denotes the number of $a$'s occurrences in the $i$-th block. 
It was shown in 
\cite{BHMR07} how query $\ra_a(i,S)$ or $\sel_a(i,S)$  can be reduced to $O(1)$ rank and select queries on a block $C$ and $O(1)$ queries on $B_a$.
The data structure for a block  $C$ is as follows. We keep a bit vector $V=1^{n_1}01^{n_2}0\ldots 1^{n_{\sigma}}$ where $n_a$ is the number of times $a$ occurs in $C$. Let $\pi(i)$ denote the 
position of $C[i]$ in the stable sorted ordering. That is, $\pi$ is the permutation of $C$ 
obtained by stably sorting the symbols of $C$. Let $\pi^{-1}$ denote the inverse of $\pi$. Then $\sel_a(i,C)=\pi^{-1}(j)$ where $j=(\sum_{g=1}^{a-1}n_g) +i$. We can find $\sum_{g=1}^p n_g$ for any $p\le \sigma$ by answering one rank and one select query on $V$. 

Let $t=\log \sigma/(\log \log \sigma)^3$.
For every symbol $a$, $1\leq a\le\sigma$, the set  $F_a$ contains every $t$-th occurrence of $a$ in $C$; that is $F_a$ contains all $j$ such that $C[j]=a$ and $\ra_a(j,C)=t\cdot i$ for some integer $i$. We keep a y-trie data structure on $F_a$, so that for any $q$ we can find the largest $j\in F_a$ satisfying $j\le q$. Furthermore we store  values of $\ra_a(j,C)$ for all $j\in F_a$. For each symbol $C[j]$, we also keep $R[j]= (\ra_{C[j]}(j,C) \mod t)$. We need $\sigma\log t$ bits to store the array $R$ and $O((\sigma/t)\log \sigma)=O(\sigma(\log\log \sigma)^3)$ bits to store $F$. Hence the total space usage is $o(\sigma\log \sigma)$.

Let $\ra'_a(i,S)$ denote the partial rank query: if $S[i]=a$, $\ra_a'(i,S)=\ra_a(i,S)$; otherwise $\ra'_a(i,S)$ is undefined. If $C[i]=a$, $\ra'_a(i,C)=R[i]+\ra_a(j,C)$ where $j$ is the largest position in $F_a$ such that $j\le i$. Since $j$ can be found in $(\log\log\sigma)$ time, $\ra'_{C[i]}(i,C)$ can be computed in $O(\log\log\sigma)$ time.
We can compute $\pi(i)$ as follows. If $C[i]=a$, then $\pi(i)=(\sum_{j=1}^{a-1}n_j) + \ra'_a(i,C)$. Since $\ra'$ can be computed in $O(\log\log \sigma)$ time, we can find $\pi(i)$ for any $i$ in $O(\log\log \sigma)$ time. 
Using the data structure of~\cite{MunroRRR12}, we can compute $\pi^{-1}(i)$ in $O(t\cdot f(\sigma))$ time using $O(n\log\sigma/t)$ additional bits, where $f(\sigma)$ is the time needed to compute $f(\sigma)$. This data structure works as follows:
We decompose the permutation $\pi=\pi(1),\pi(2),\ldots, \pi(\sigma)$ into cycles. A cycle is the shortest subsequence $i_1,\ldots ,i_s$ of $\pi$ such that $\pi(i_j)=i_{j+1}$ for $1\le j< s$ and $\pi(i_s)=i_1$. For every cycle of length $s\ge t$, we select every $t$-th element and mark it. We keep the value of $\pi^{-t}$ for the marked elements where $\pi^{-t}$ denotes the inverse of $\pi$ iterated $t$ times. In order to find $\pi^{-1}(i)$, we compute $\pi(i)$, $\pi^2(i)=\pi(\pi(i))$, $\pi^3(i)$, $\ldots$ until we reach a marked position $i_m$ or $\pi^{k}(i)=i$ for some $k$. If $\pi^{k}(i)=i$, then $\pi^{-1}(i)=\pi^{k-1}(i)$. If we reached a marked position $i_m$, we compute $i'=\pi^{-t}(i_m)$. Then we identify $\pi(i')$, $\pi^2(i')$, $\ldots$ until $\pi^l(i')=i$. Clearly $\pi^{-1}(i)=\pi^{l-1}(i')$ in this case. It is easy to check that we must compute $\pi$ at most $t$ times; details can be found in~\cite{MunroRRR12}.  
Thus $\pi^{-1}(i)$ is computed in $O(t\cdot \log\log \sigma)$ time. We already showed how to answer $\sel$ query using $\pi^{-1}$. Hence $\sel_a(i,C)$ is also answered in $O(t\log\log\sigma)$ time. To answer a rank query $\ra_a(i,C)$, we first find the largest $j\in F_a$ such that $j\le i$. If $\ra_a(j,C)=s\cdot t$, then $st\le \ra_a(i,C) < (s+1)t$. We can find the exact value of $\ra_a(i,C)$ by answering $O(\log t)$ $\sel$ queries as described in\cite{GMR06,BHMR07}. Hence $\ra_a(i,C)$ is computed in $O(t\log t \log\log \sigma)$ time.  We set $t=\log \sigma/(\log \log \sigma)^3$. Hence a query $\ra_a(i,C)$ is answered in $O(\log\sigma/\log\log\sigma)$ time.

\paragraph{Linear Construction Time.}
The data structure described above can be constructed in $O(n)$ time.  We can split $S$ into blocks in linear time. Then we stably sort each block and compute the number of times $n_a$ the symbol $a\in\Sigma$ occurs in a block. We can implement stable sorting by replacing each $C[i]$ with $C'[i]=C[i]\cdot \sigma+ i$ and applying radix sort to the resulting sequence. 
Using sorted array $C'$, we can: (i) compute $\pi(i)$ for each position $i$ within a block; (ii) find values of $n_a$ for each symbol $a$ and construct the sequence $V$; (iii) generate sets $F'_a$ and the array $R$. All these auxiliary structures can be created in linear time. 
We can construct a $y$-trie for $F_a$ in $O(|F_a|(\log\log \sigma)^3)$ time: 
each element of a y-trie is kept in $O(\log\log \sigma)$ dictionary data structures; using the deterministic method described in~\cite{Ruzic08}, we can construct a dictionary with $m\le \sigma$ elements in $O(m(\log\log\sigma)^2)$ time. 
Hence the total time needed to construct a y-trie is $O(|F_a|(\log\log\sigma)^3)$. Since all $F_a$ contain $O(\sigma/t)$ elements,
y-tries for all $F_a$ are created in $O(\sigma(\log\log \sigma)^3/t +\sigma)$ time.
Since we can compute $\pi(i)$ for each $i$ in $O(1)$ time using $C'$, we can produce a data structure for computing $\pi^{-1}$ in linear time. Thus the data structure for answering rank and select queries in a block can 
be created in $O(\sigma)$ time. When values of $n_a$ are known for all blocks we can  construct global bit sequences $B_a$ for each  $a\in \Sigma$. 

\paragraph{Data Structure  for $\log^{1/2}n < \sigma\le 2^{\log^{1/3}n}$.}
In this case the data structure can be constructed in less than linear time. We assume that the symbols of $S$ are initially packed into words of $\log n$ bits so that each word contains $\Theta(\log_{\sigma}n)$ symbols. 
We split the sequence $S$ into blocks of size $s=\sigma\log n$. We keep exactly the same 
data structures for each block as in the case of  $\sigma> 2^{\log^{1/3}n}$ and bit sequences $B_a$ defined in the same way as above. 
We start by splitting $S$ into blocks and producing an array $C'$ for each block $C$ so that 
$C'[i]=C[i]\cdot s + i$. This step takes $O(s/\log^{2/3}n)$ time. 
$C'$ can be sorted in $O(n/\log^{1/3}n)$ time, using the ideas of sorting algorithms for small integers described in~\cite{AnderssonHNR95} and~\cite{AlbersH97} . Then we can traverse sorted array $C'$ and generate sets that must be stored in data structures $F_a$ in $O((s/\log^{2/3}n)+\sigma)$ time. 
All $F_a$ contain $O(s/t)$ elements and can be constructed in $O((s/t)(\log\log n)^3+ \sigma)=O((s/t)(\log\log n)^3)$ time. 
We traverse $C'$ again and obtain $R'[i]=R[C'[i]]$ for each $i$. Given $R'$, we can construct $R$ by ``reverse sorting''. Let $R_1[i]=C'[i]*(\log \sigma)+ R'[i]$. That is, the first $\log \sigma$ most significant  bits of $R_1[i]$ contain a symbol $C[j]$ of the sequence $C$, the next $\log s$ bits contain its position $j$ in $C$, the next $\log\log \sigma$ bits contain the value of $R[j]$. We sort $R_1$ according to 
the value of bits at positions $\log \sigma +1$, $\ldots$, $\log \sigma+\log s$ (bits that correspond to the positions $j$ of symbols in the original sequence) and then discard the first $\log \sigma +\log s$ bits. 
The resulting array is the array $R$. 

We can also use $C'$ to construct the bit sequence $V$: we traverse $C'$ and compute $n_a$ for all $a$, $1\le a\le \sigma$. When all $n_a$ are known, we can produce $V$ in $O(s/\log^{1/3} n)$ time; a data structure supporting rank and select on $V$ can be also produced in $O(s/\log^{1/3}n)$ time. 

Finally we need to create the data structure for computing $\pi^{-1}$. Recall that we have to find all cycles of length at least $t$ and select every $t$-th element in a cycle. 
Let $d=\log^fn$ for $f=1/6$. During the first stage we create $s/d$ tuples so that each tuple is of the form $(j,\pi(j),\pi^2(j),\ldots,\pi^r(j))$ for some $r\le d$ and each integer $i$, $1\le i\le s$ occurs in at most one tuple. 
First we obtain values $\pi(i)$ for all $i\in [1,s]$ and keep tuples 
$(i,\pi(i))$ in the array $P_1$. Using $C'$, we can obtain $P_1$ in $O(s/\log^{1/3}n)$ time. We traverse $P_1$ and remove all tuples 
$(i,\pi(i))$ such that $\pi(i)=i$. Then we obtain the sequence $P_2$ 
that contains tuples $(i,\pi(i),\pi^2(i))$ for all $i$ such that $(i,\pi(i))$ 
is still in $P_1$. We create a new instance $P_1'$ of $P_1$ and sort all tuples by their second components. Elements of $P_1'$ are tuples $(i,\pi(i))$ sorted by $\pi(i)$. Elements of $P_1$ are  tuples $(i,\pi(i))$ sorted by $i$. Both $P_1$ and $P_1'$ are traversed simultaneously. 
If the $j$-th tuple in $P_1$ is $(i_j,\pi(i_j))$ and $\pi(i_j)=v$, then the
$j$-th tuple in $P_1'$ is $(v,\pi(v))$. When we read the $P_1[j]$ and $P_1'[j]$, we create the new tuple $(i_j,\pi(i_j),\pi(v)=\pi^2(i_j))$ and keep it in a sequence $P_2$. When $P_2$ is constructed, we discard $P_1'$;
then we traverse $P_2$ and remove all $(i_k,\pi(i_k),\pi^2(i_k))$ such that 
$\pi^2(i_k)=\pi(i_k)$. This procedure is iterated $d-1$ times. During the $k$-th iteration, we sort tuples in $P_{k-1}$ by their last components and obtain $P_{k-1}'$. Then we merge $P'_{k-1}$ with $P_1$ and obtain $P_k$. 
We traverse $P_k$ and remove tuples $(i_j,\ldots,\pi^k(i_j))$ satisfying 
$\pi^k(i_j)=i_j$. Each iteration takes $O(s/\log^{1/3}n)$ time. Hence $P_d$ 
is obtained in $O(s/\log^{1/6}n)$ time.

At the end of the first stage we obtain the sequence $P_d$. Every value 
$i$ that is not in a cycle of length $v\le d$ is stored in exactly one tuple of $P_d$. Hence $P_d$ consists of $s/d$ tuples. We can easily process all tuples in $O(|P_d|)$ time and find all values $i$, $1\le i\le s$, that 
must be marked. We can find $\pi^{-t}(i)$ for all marked positions $i$ in $O(s/d)$ time. Thus the structure for computing $\pi^{-1}$ is constructed in $O(s/\log^{1/6}n)$ time. The total time needed to produce the static data structure for a sequence $S$ is thus $O(|S|/\log^{1/6}n)$.

\paragraph{Data Structure for $\sigma< \log^{1/2} n$}
In the case when $\sigma$ is very small, we use a different data structure. 
We implement rank and select operations on $S$ using the result of Theorem 13 in~\cite{BN12}. Their data structure splits $S$ into chunks of size $\log_{\sigma}n/2$. Each chunk is kept as in~\cite{RRR07}. We can traverse $S$ and obtain compressed representation of each chunk in $O(n/\log_{\sigma}n)$ time. We maintain certain bit sequences for chunks that are described in~\cite{BN12} and can be constructed in $O(n\sigma/\log_{\sigma}n)$ time. Since $\sigma<\log^{1/2}n$, $O(n\sigma/\log_{\sigma}n)=O(n\log\log n/\log^{1/2}n)$. This representation of $S$ also supports fast substring extraction: since $S$ is kept in chunks, we can decode all symbols from a chunk in $O(1)$ time and retrieve a string of length $l$ in $O(l/\log_{\sigma}n)$ time. 

\begin{theorem}
  \label{theor:static}
There exists a data structure $D$  that that stores a sequence $S[1,n]$ in  $nH_k+O(n(\log\log\sigma)^3)$ bits, where $\sigma$ is the alphabet size,  and supports queries $\acc$, $\ra$, and $\sel$ in $O(\log n/\log\log n)$ time.  $D$ can be constructed in $O(n)$ time. \\
Suppose that $\sigma\le 2^{\log^{1/3}n}$ and $S$ is initially stored in $O(n/\log_{\sigma}n)$ words, so that every word contains $\Theta(\log_{\sigma}n)$ consecutive symbols; then $D$ can be constructed in $O(n/\log^{1/6}n)$ time.  
\end{theorem}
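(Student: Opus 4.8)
The plan is to assemble the statement from the three ingredients developed above, according to the size of $\sigma$. Assume first $\sigma>\log^{1/2}n$. I would store $S$ in the Ferragina--Venturini meta-symbol encoding over $\Sigma^{\ell}$ with $\ell=\ceil{(\log_{\sigma}n)/2}$, which by the inequality $H_0(S^M)\le H_k(S)+(n/\ell)k\log\sigma$ and Huffman coding of $S^M$ takes $nH_k(S)+o(n\log\sigma)$ bits for every $k=o(\log_{\sigma}n)$ simultaneously. On top of this I layer the block decomposition into blocks of $\sigma$ symbols: the global bit sequences $B_a$ occupy $o(n\log\sigma)$ bits in total, while each per-block structure — the vector $V$, the sampled sets $F_a$ stored in y-tries, the residue array $R$, and the structure of~\cite{MunroRRR12} for $\pi^{-1}$ — uses only $o(\sigma\log\sigma)$ bits, so over all blocks the overhead is $O(n(\log\log\sigma)^3)$. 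Summing gives the claimed space bound $nH_k+O(n(\log\log\sigma)^3)$.

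For the query times I would invoke the reduction of Barbay et al.~\cite{BHMR07}: a $\ra_a$ or $\sel_a$ query on $S$ reduces to $O(1)$ rank/select queries on one block $C$ together with $O(1)$ queries on $B_a$, the latter answered in $O(\log n/\log\log n)$ time. Inside $C$ I would compute $\pi(i)$ in $O(\log\log\sigma)$ time from $F_{C[i]}$ and $R$, then $\pi^{-1}(i)$ in $O(t\log\log\sigma)$ time with $t=\log\sigma/(\log\log\sigma)^3$ using the cycle-sampling trick, answer $\sel_a(i,C)$ directly through $\pi^{-1}$, and answer $\ra_a(i,C)$ with $O(\log t)$ further $\sel$ calls as in~\cite{GMR06,BHMR07}; this is $O(\log\sigma/\log\log\sigma)\subseteq O(\log n/\log\log n)$, and $\acc$ is even simpler. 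For the remaining regime $\sigma<\log^{1/2}n$ I would instead cite Theorem~13 of~\cite{BN12} — chunks of $\log_{\sigma}n/2$ symbols stored as in~\cite{RRR07} — which already meets the space and time bounds (here $(\log\log\sigma)^3=O(1)$) and, crucially, is constructible in $O(n\sigma/\log_{\sigma}n)=O(n\log\log n/\log^{1/2}n)=o(n)$ time.

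It remains to bound the construction time. In the general case it is linear: splitting $S$ into blocks, stably sorting each block by replacing $C[i]$ with $C[i]\sigma+i$ and radix sorting, and reading off $n_a$, $V$, $F_a$, $R$ are all $O(n)$; each y-trie for $F_a$ is built in $O(|F_a|(\log\log\sigma)^3)$ time using the deterministic dictionary construction of~\cite{Ruzic08}, summing to $O(\sigma)$ per block; and since $\pi(i)$ is available in $O(1)$ time from the sorted block, the $\pi^{-1}$ structure is also built in linear time. For $\log^{1/2}n<\sigma\le 2^{\log^{1/3}n}$ with $S$ word-packed, I would replace every linear subroutine by its small-integer-sorting counterpart of~\cite{AnderssonHNR95,AlbersH97}: blocks of size $s=\sigma\log n$, the sorted array $C'$ obtained in $O(n/\log^{1/3}n)$ time, and $R$ recovered by one further sort-and-discard pass. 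The one genuinely delicate step — and the one I expect to be the main obstacle — is constructing the $\pi^{-1}$ structure in $o(s)$ time per block, since a naive cycle walk costs $\Theta(s)$: the remedy is to batch the walk into tuples $(j,\pi(j),\ldots,\pi^{r}(j))$ of length up to $d=\log^{1/6}n$, where each of the $d-1$ iterations sorts the current tuples by their last component, merges against the array of pairs $(i,\pi(i))$ to extend every tuple by one more step, and discards the tuples that have just closed a short cycle, all in $O(s/\log^{1/3}n)$ time; then the positions to be marked and their $\pi^{-t}$ pointers are read off $P_d$ in $O(s/d)$ time. Summing over blocks yields $O(n/\log^{1/6}n)$, and since the per-block query structures are identical to the general case, the space and query guarantees established above carry over unchanged.
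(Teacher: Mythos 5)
Your proposal follows the paper's own proof essentially step for step: the Ferragina--Venturini meta-symbol encoding for the $nH_k$ space bound, the $\sigma$-sized blocks with $B_a$, $V$, the sampled sets $F_a$ in y-tries, the residue array $R$ and the cycle-sampled $\pi^{-1}$ structure of~\cite{MunroRRR12} for queries via~\cite{BHMR07}, the three alphabet regimes (Theorem~13 of~\cite{BN12} with~\cite{RRR07} chunks for $\sigma<\log^{1/2}n$, radix sort plus~\cite{Ruzic08} dictionaries for linear-time construction, and word-packed sorting~\cite{AnderssonHNR95,AlbersH97} with the $d=\log^{1/6}n$ tuple-batching walk for the sublinear case). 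It is correct and not a different route, so there is nothing further to compare.
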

Finally we remark about re-building static sequences that is needed by background processes described in Section~\ref{sec:updatesbackground}. When a subsequence $S_i$ is re-built, we retrieve $S_i$ using the algorithm for substring extraction in $O(|S_i|/\log_{\sigma}n)$ time. The decoded sequence $S_i$ 
is then kept in uncompressed form; we keep $S_i$ in a sequence of words, so that each word contains $\log_{\sigma}n$ symbols. We can apply the construction algorithms described in this section to uncompressed sequence $S_i$. The workspace needed to store $S_i$ in plain form is $O(|S_i|\log \sigma)$ bits.

\section{Operation $\sel'$ on $\tS$ and Reporting a Substring of a Binary Sequence}
\label{sec:appselprime}
\paragraph{Operation $\sel'_a(i,\tS)$.}
Let $S_a$ be the subsequence of $S$ that consists of all occurrences of a symbol $a$. 
We maintain a bit sequence $\tW$ for each sequence $S_a$. 
For every element of $S_a$, we keep one or two consecutive bits 
in $\tW$. If the $j$-th occurrence of $a$ is not stored in $\tS$, then we represent it by a $0$; if the $j$-th occurrence 
is stored in $\tS$ (i.e., it is stored in either $S'_a$ or $S_0$), then we represent it by 
a two-bit sequence $10$. Let $f$ denote the number of symbols in $S'_a$ among the first $j$ symbols of $S_a$. Then $S_a[j]$ is represented by the $(j+f)$-th bit in $\tW$ or by the $(j+f)$-th and  $(j+f+1)$-st bits in $\tW$: if $S_a[j]$ is stored 
in $S'_a$, then $\tW[f+j]=1$ and $\tW[f+j+1]=0$; otherwise 
$\tW[f+j]=0$ and $\tW[f+j+1]$ represents the next symbol in $S_a$. We can answer rank and select queries on $\tW$ and support updates on $\tW$ in $O(\log n/\log \log n)$ time. 
Let $v_1=\sel_0(i,\tW)$ and $v_2=\ra_1(v_1,\tW)$. Then 
$\sel'_a(i,\tS)=v_2$.

\paragraph{Reporting a Substring  in a Binary Sequence.}
Let $M$ be a binary sequence. 
We prove the following Lemma:
\begin{lemma}
  \label{lemma:substr}
Let $M$ be a binary  sequence of length $n$ with  
$O(n/r)$ $0$-bits. We can store $M$ in $O((n/r)\log r)$ bits, 
so that any substring $M[i..i+\ell-1]$ can be obtained in $O(\log n/\log\log n+\ell/\log n)$ time.
Insertions and deletions are supported in $O(\log n/\log \log n)$ time. 
\end{lemma}
\begin{proof}
  We store $M$ using a variant of run-length encoding: 
each substring that consists of $d$ $1$-bits followed by a $0$-bit, where $0\le d \le 2\log^2 n$, is encoded as an integer $d$. For instance, a sequence $100011110$ will be encoded as 
$1,0,0,0,4$. We divide the run-length encoded sequence into blocks, such that each block consists of at least $\log n/8\log\log n$ and at most $\log n/4$ run-lengths and the length of each block is at most $\log n/2$ bits. Run-lengths are delta-encoded so that a run of length $d$ uses $\log d+o(\log d)$ bits. Thus  each block contains $\Omega(\log n)$ bits. 

We  also maintain an additional data structure $A$ that finds for each position $j$ in $M$, the run-length $d$ that encodes $M[j]$ and the block that contains the run-length $d$.  $A$ encodes every run-length in unary. Thus a run of length $d$ is represented by 
$1^d0$. Since $M$ contains $O(n/r)$ $0$-bits, $M$ consists of $O(n/r)$ runs of $1$'s followed by a $0$. Hence $A$ consists of $O(n/r)$ $0$'s and $O(n)$ $1$-bits. The sequence $A'$ encodes in unary the number of runs in every block of $M$.  Using standard methods, we can keep $A$ and $A'$ in $O((n/r)\log r)$ bits and support queries and updates in $O(\log n/\log \log n)$ time. 
Using rank and select queries on $A$ and $A'$, we can find the block that encodes $M[j]$ and the position of $M[j]$ in its block
for any $j$, $1\le j\le n$, in $O(\log n/\log\log n)$ time. 
We also keep a look-up table $Tbl$ that enables us to  retrieve all $k$ elements stored in a 
block in $O(k/\log n)$ time; for every block, $Tbl$ contains the sequence of bits encoded by this block.  Since there are $O(n^{1/2})$ different blocks and each block encodes a poly-logarithmic number of elements, $Tbl$ uses $o(n)$ bits. 

 Each block contains either at least $\log^2n/4\log\log n$ $1$-bits or at least 
$\log n/4\log\log n$ $0$-bits. Hence the total number of blocks 
is $O(n (\log \log n/\log^2n) + (n/r)(\log \log n/\log n))$. 
Each block needs $O(\log n)$ bits. Hence all blocks 
use $O((n/r)\log\log n)$ bits.  

To extract a substring $M[i..i+\ell-1]$, we start by finding the block $Bl$ that contains $M[i]$ and the position of $M[i]$ in $Bl$.  Then we simply decode the remaining part of the block $Bl$ and the following blocks until $O(\ell)$ symbols are decoded.
\end{proof}

\section{Substring Extraction}
\label{sec:substr}
Now we show how the fully-dynamic data structure described in Section~\ref{sec:compr2} supports the operation of retrieving a substring of length $\ell$.  
Suppose that we want to extract the substring $S[i..i+\ell]$.
We keep a copy $S_w$ of subsequence $S_0$ implemented as follows. 
$S_w$ is split into words, such that each word contains between $\log_{\sigma}n/4$ and $\log_{\sigma}n/2$ symbols of $S_0$. Let $w_i$ be the number of symbols in the $i$-th word; 
we maintain a prefix-sum data structure on $w_i$. Using this data structure, we can find the word $S_w[j]$ that contains the $i$-th symbol of $S_0$ in $O(\log n/\log\log n)$ time. We can find the position $o_i$ of $S_0[i]$ in that word in $O(1)$ time. Using table look-up, we can extract the remaining symbols of $S_w[j]$ in $O(1)$ time. If $w_j-o_i<\ell$, we extract the following symbols from words $S_w[j+1]$, $S_w[j+2]$, $\ldots$ until $\ell$ symbols are 
reported. 

The static data structure on $S_i$ can be used to extract $\ell$ symbols in $O(\log n/\log\log n+\ell/\log_{\sigma}n)$ time. Some of these symbols can, however, be marked as deleted. We use the following additional structures in order to extract $\Theta(\log_{\sigma}n)$ undeleted symbols in $O(1)$ time.
Recall that each sequence $S_i$ is stored as a sequence of meta-symbols $S^M_i$ and every meta-symbol represents $\ceil{\frac{\log_{\sigma}n}{2}}$ symbols. We say that a meta-symbol $S^M_i[j]$ is \emph{spoiled} if at least $\log_{\sigma}n/4$ symbols represented by $S^M_i[j]$ are marked as deleted. A symbol is spoiled if it is stored in a spoiled meta-symbol. Positions of spoiled symbols are indicated by a binary sequence $SPOS_i$. That is, $SPOS_i[j]=1$ iff the symbol $S_i[j]$ is not spoiled. Symbols stored in spoiled meta-symbols are also kept in a sequence $V_i$. Representation of  $V_i$ is similar to representation of $S_w$, but it contains only undeleted symbols stored in spoiled meta-symbols. $V_i$ is divided into words and each word $V_i^M[j]$ contains up to $\log_{\sigma}n/2$ 
symbols.  If a word $V_i^M[j]$ contains less than $\log_{\sigma}n/4$ symbols, than the last symbol in this word is followed by a non-spoiled symbol. Each word is augmented with a field $next$. 
Let $fol(j)$ denote the symbol that follows the last symbol in $V_i^M[j]$.
 $V^M_i[j].next=NULL$ if $fol(j)$ is spoiled; otherwise $V^M_i[j].next$ points to the position of $fol(j)$ in $S_i^M$. 
A sequence  $VPOS_i$ indicates boundaries of words in $V_i$: $VPOS_i$ contains a $0$-bit for every symbol in $V_i$ that is not the last symbol in 
its word $V_i^M$; $VPOS_i$ contains a two-bit substring $01$ for every symbol that is the last symbol in its word. Thus each symbol is encoded by a $0$-bit and the end of every word in $V$ is encoded by a $1$-bit. If a symbol $S_i[j]$ is not marked as deleted and kept in a spoiled meta-symbol, then we can find the position of $S_i[j]$ in $V_i$ by answering one rank query on $SPOS_i$ and one rank and one select query on $VPOS_i$. 

The total number of symbols that are marked as deleted in all $S_i$ is bounded by $O(n/r)$. Hence the number of spoiled symbols in all $S_i$ is also $O(n/r)$. 
Non-deleted symbols kept in a spoiled meta-symbol are stored in at most three words of $V_i^M$. Hence the total number of words in all $V_i^M$ is bounded by $O(n/(r\log_{\sigma}n))$. Since every word uses $O(\log n)$ bits of space, all $V_i$ need $O((n/r)\log \sigma)$  bits.  All bit sequences $VPOS_i$ and $SPOS_i$ use $O(n/r)$ and $O((n/r)\log r)$ bits respectively. Hence we need 
$O((n/r)\log\sigma)$ additional bits in order to support substring extraction.

Suppose that a string $S_t[i..i+l]$ must be extracted. We find the meta-symbol $S^M_t[j_0]$ that contains $S_t[i]$ and decode meta-symbols $S^M_t[j_0]$, $S^M_t[j_0+1]$, $\ldots$ and output the appropriate symbols until $l$ symbols are reported or a spoiled meta-symbol is encountered. If the symbol $S^M_t[j]$ 
is spoiled, we find the position of $S^M_t[j]$ in $V_t$ and output  symbols from $V_t$. If we enumerated symbols of $V^M_t[j_1]$ and $V^M_t[j_1].next\not=Null$, then we switch back to $S^M_t[j_2]$, where $S^M_t[j_2]$ is the meta-symbol that is pointed to by $V^M_t[j_1].next$, and decode symbols from $S^M_t[j_2]$, $S^M_t[j_2+1]$, $\ldots$ until a spoiled symbol is encountered. We output symbols from $V_t^M[j_1+1]$, $\ldots$, $V^M_t[j_2]$ until $V_t^M[j_2].next\not=NULL$.  
We proceed in the same way until $l$ symbols are decoded. 
Each meta-symbol of $S_t$ and each word of $V_t$ is processed in $O(1)$ time. It is easy to check that the total number of words and meta-symbols  is bounded by $O(l/\log_{\sigma} n)$. Every retrieved non-spoiled symbol in $S_t^M$, except for the first one and the last one, contains $\Theta(\log_{\sigma}n)$ symbols. Every processed word in $V^M_t$, except for the last one, either contains $\Theta(\log_{\sigma}n)$ symbols or is followed by a non-spoiled meta-symbol.  The position of the first accessed spoiled symbol in $V_t$ is computed in $O(\log n/\log\log n)$ time. The position of the first accessed meta-symbol in $S_t^M$ is also computed in $O(\log n/\log\log n)$ time.  Thus the total query time is $O(\log n/\log \log n +l/\log_{\sigma}n)$.

 The extraction of $l$ symbols $S[i..i+l]$ from the global sequence $S$ is implemented as follows. We find $i_0=\ra_0(i,R)$ and $i_1=\ra_1(i,R)$. We compute  $t$ such that $\sum_{j=1}^{t-1}Size[j]< i_1\le \sum_{j=1}^jSize[j]$ and extract substring $S_t[f..f+l]$ for $f=i_1-\sum_{j=1}^{t-1}Size[j]$. If the end of $S_t$ is reached, we extract remaining symbols from $S_{t+1}$, $S_{t+2}$, $\ldots$. 
We also extract $S_0[i_0..i_0+l]$. Let $Str_1$ be the substring extracted from the static subsequence (or subsequences) and let $Str_0$ be the string extracted from $S_0$. We can merge the prefix of $Str_1$ with the prefix of $Str_0$ using $R$. 
At each step we consider the next $\log_{\sigma}n/6$ symbols of $Str_0$ and $\log_{\sigma}n/6$ symbols of $Str_1$ that are not processed yet. Suppose that these symbols are stored in words $W_0$ and $W_1$ respectively.  We read the next $\log_{\sigma}n/6$ bits of $R$ and keep them in a bit sequence $R_W$. Using a look-up table, we can obtain the sequence $W_{res}$ that consists of $\log_{\sigma}n/6$ following symbols in $O(1)$ time: if $R_W[j]=1$, then the $j$-th symbol of $W_{res}$ is the $r_0$-th symbol of $W_0$ where $r_0$ is the number of $0$'s among the first $j$ bits of $R_W$; otherwise the $j$-th symbol of $W_{res}$ is the $r_1$-th symbol of $W_1$ where $r_1$ is the number of $1$'s among the first $j$ bits of $R_W$. The sequence $W_{res}$ contains the next $\log_{\sigma}n/6$ symbols of $S[i..i+l]$. Proceeding in the same way, we can obtain the substring $S[i..i+l]$ in $O(6l/\log_{\sigma}n)=O(l/\log_{\sigma}n)$ time.

\end{document}